   \numberwithin{equation}{section}
\newtheorem{thm}{Theorem}[section]
\newtheorem{lem}[thm]{Lemma}
\newtheorem{defn}[thm]{Definition}
\newtheorem{rem}[thm]{Remark}
\newtheorem{exam}[thm]{Example}
\begin{document}
\begin{frontmatter}
\author[rvt1]{Jian Wang}
\ead{wangj@tute.edu.cn}
\author[rvt2]{Yong Wang\corref{cor2}}
\ead{wangy581@nenu.edu.cn}

\cortext[cor2]{Corresponding author.}
\address[rvt1]{School of Science, Tianjin University of Technology and Education, Tianjin, 300222, P.R.China}
\address[rvt2]{School of Mathematics and Statistics, Northeast Normal University,
Changchun, 130024, P.R.China}

\title{The spectral torsion for the Connes type operator}
\begin{abstract}
This paper aims to  provide an explicit computation of the spectral torsion associated with the Connes type operator
on even dimension compact manifolds.
And we also extend the spectral torsion for the Connes type operator to
compact manifolds  with boundary.
\end{abstract}
\begin{keyword}
 Spectral Torsion; Connes type operator; noncommutative residue.
\MSC[2000] 53G20, 53A30, 46L87
\end{keyword}
\end{frontmatter}
\section{Introduction}
\label{1}

Let $(M, g)$ be a compact Riemannian manifold, for Dirac operator
 with a fully antisymetric torsion $D_{T}=D-\frac{i}{8}T_{jkl}\gamma^{j}\gamma^{k}\gamma^{l}$,
the torsion functional defined by Dabrowski et al.\cite{DSZ} is the trilinear Clifford multiplication by functional of differential one-forms $u,v,w$
\begin{align}
\mathscr{T}(u,v,w)={\rm Wres}\big(\hat{u}\hat{v}\hat{w}D_{T}|D_{T}|^{-n}\big),
\end{align}
which recovers the torsion of the linear
connection for a canonical spectral triple over a closed spin manifold.
 Dirac operators with torsion are by now well-established analytical tools in the study of special geometric structures. Ackermann and
  Tolksdorf \cite{AT} proved a generalized version of the well-known Lichnerowicz formula for the square of the
most general Dirac operator with torsion  $D_{T}$ on an even-dimensional spin manifold associated to a metric connection with torsion.
In \cite{PS1}, Pf$\ddot{a}$ffle and Stephan considered orthogonal connections with arbitrary torsion on compact Riemannian manifolds,
and for the induced Dirac operators, twisted Dirac operators and Dirac operators of Chamseddine-Connes type they computed the spectral
action. In \cite{WW2},  we proved the Kastler-Kalau-Walze type theorem associated to Dirac operators with torsion on compact manifolds
 with boundary, and gave two kinds of operator-theoretic explanations of the gravitational action in the case of lower dimensional compact manifolds
 with flat boundary. Recently, Dabrowski et al.\cite{DSZ} introduced a trilinear functional of differential one-forms for a finitely
summable regular spectral triple with a noncommutative residue, and proposed a plain, purely spectral method to
determine the torsion as the density of the torsion functional and imposed
the torsion-free condition for regular finitely summable spectral triples.
The method of producing these spectral functionals in \cite{DSZ} is the noncommutative residue (or be written as
noncommutative residue) density of a suitable power of the Laplace type operator multiplied by a pair of other differential operators.

The noncommutative residue plays a prominent role in noncommutative geometry.
For one-dimensional manifolds, the noncommutative residue was discovered by Adler \cite{MA}
 in connection with geometric aspects of nonlinear partial differential equations.
 For arbitrary closed compact $n$-dimensional manifolds, the noncommutative reside was introduced by Wodzicki in \cite{Wo,Wo1} using the theory of zeta functions of elliptic pseudodifferential operators.
In \cite{Co1}, Connes used the noncommutative residue to derive a conformal 4-dimensional Polyakov action analogue.
Furthermore, Connes \cite{Co2} made a challenging observation that the noncommutative residue of the square of the inverse of the
Dirac operator was proportional to the Einstein-Hilbert action.
Let $s$ be the scalar curvature and Wres denote  the noncommutative residue,  the Kastler-Kalau-Walze type
theorem \cite{KW,Ka} gives an operator-theoretic explanation of the gravitational action and says that for a $4-$dimensional closed spin manifold,
 there exists a constant $c_0$, such that
 \begin{equation*}
{\rm  Wres}(D^{-2})=c_0\int_Ms{\rm dvol}_M.
\end{equation*}
And then, Ackermann proved that
the Wodzicki residue ${\rm  Wres}(D^{-2})$ in turn is essentially the second  coefficient
of the heat kernel expansion of $D^{2}$ in \cite{Ac}.
 Furthermore, Fedosov et al. defined the noncommutative residue on Boutet de Monvel's algebra and proved that it was a
unique continuous trace for the case of manifolds with boundary \cite{FGLS}.
In \cite{S}, Schrohe gave the relation between the Dixmier trace and the noncommutative residue for
manifolds with boundary.
Then Wang generalized Connes' results to the case of manifolds with boundary,
and proved  Kastler-Kalau-Walze type theorems for lower-dimensional manifolds
with boundary associated with Dirac operator and signature operator \cite{Wa1,Wa2,Wa3}.
And then, Sitarz and Zajac \cite{SZ} investigated the spectral action for scalar perturbations of Dirac operators.
Iochum and Levy \cite{IL}computed the heat kernel coefficients for Dirac operators with one-form perturbations.
Wang \cite{Wa4} considered the arbitrary perturbations of Dirac operators, and established the associated Kastler-
Kalau-Walze theorem. We gave two Lichnerowicz type formulas for Dirac operators
and signature operators twisted by a vector bundle with a non-unitary connection\cite{WW2}.
And in \cite{WWW}, we got two Lichnerowicz type formulas for the Dirac-Witten operators, and
gave the proof of Kastler-Kalau-Walze type theorems for the Dirac-Witten operators on 4-dimensional and 6-dimensional compact manifolds with
(resp. without) boundary.

Notably, in the preceding the calculation of the Wodzicki residue
research to recover the torsion of the linear
connection,  Dabrowski et al. \cite{DSZ} showed that the spectral definition of torsion can be readily extended to the noncommutative
case of spectral triples. By twisting the spectral triple of a Riemannian spin manifold, Martinetti et al. showed how
to generate an orthogonal and geodesic preserving torsion from a torsionless
Dirac operator in \cite{MNZ}. Above studied operators are the Connes type operators in this paper,
 motivated by the spectral torsion  \cite{DSZ}, the noncommutative residue and
Kastler-Kalau-Walze type theorem,  we give an explicit computation of the spectral torsion associated with Connes type operator
on compact manifolds with (or without) boundary.
   We end the introduction with a brief discussion on the organization of the paper.
    In Sections 2-3, we give preliminaries for the Connes type operator
and explicit computation of the spectral torsion associated with the Connes type operator.
  In Sections 4, for several different examples of Connes type operators with
  the trilinear Clifford multiplication by functional of differential one-forms, we compute
 the spectral torsion for manifolds with (or without) boundary..

\section{Preliminaries for the Connes type operator}

The purpose of this section is to specify the Connes type operator.
 Let $M$ be a compact oriented Riemannian manifold of even dimension $n=2m$, and
 let  $\mathcal{E}$ be the spinor bundle over $M$ wiht the Hermitian structure and the spin connection $\nabla^{\mathcal{E}}$.
\begin{defn}
A Dirac operator $D$ on a $\mathbb{Z}_{2}$-graded vector bundle $\mathcal{E}$ is a
first-order differential operator of odd parity on $\mathcal{E}$,
 \begin{align}
D:\Gamma(M,\mathcal{E}^{\pm})\rightarrow\Gamma(M,\mathcal{E}^{\mp}),
\end{align}
such that $D^{2}$ is a generalized Laplacian.
\end{defn}
 Let $\nabla^L$ denote the Levi-Civita connection about $g^M$. In the
 fixed orthonormal frame $\{e_1,\cdots,e_n\}$ and natural frames $\{\partial_1,\cdots,\partial_n\}$ of $TM$,
the connection matrix $(\omega_{s,t})$ defined by
\begin{equation}
\nabla^L(e_1,\cdots,e_n)= (e_1,\cdots,e_n)(\omega_{s,t}).
\end{equation}
 Then the Dirac operator has the form
\begin{equation}
D=\sum^n_{i=1}c(e_i)\nabla_{e_i}^{S}=\sum^n_{i=1}c(e_i)\Big[e_i
-\frac{1}{4}\sum_{s,t}\omega_{s,t}(e_i)c(e_s)c(e_t)\Big].
\end{equation}
The Levi-Civita connection
$\nabla^L: \Gamma(TM)\rightarrow \Gamma(T^{*}M\otimes TM)$ on $M$ induces a connection
$\nabla^{S}: \Gamma(S)\rightarrow \Gamma(T^{*}M\otimes S).$ By adding a additional torsion term $t\in\Omega^{1}(M,{\rm{ End}}(TM))$ we
obtain a new covariant derivative
 \begin{equation}
\widetilde{\nabla}:=\nabla^L +t
\end{equation}
on the tangent bundle $TM$. Since $t$ is really a one-form on $M$ with values in the bundle of skew endomorphism $Sk(TM)$ in \cite{GHV},
$\widetilde{\nabla}$  is in fact compatible with the Riemannian metric $g$ and therefore also induces a connection $\widetilde{\nabla}^{S}:=\nabla^{S}+T$
on the spinor bundle. Here $T\in\Omega^{1}(M, {\rm{ End}} S)$ denotes the `lifted' torsion term $t\in\Omega^{1}(M, {\rm{ End}}(TM))$.
 As Clifford multiplication by any 3-form $T$ is self-adjoint we have
 \begin{equation}
D_{T}\psi=D\psi+\frac{3}{2}T\cdot\psi-\frac{n-1}{2}Y\cdot\psi,
\end{equation}
where $c(T)=\sum _{1\leq\alpha<\beta<\gamma\leq n}
 T_{\alpha \beta \gamma}c(e_{\alpha})c(e_{\beta})c(e_{\gamma})$ and the Clifford multiplication by the vector field $Y$ is skew-adjoint,
the hermitian product on the spinor bundle one observes
that $D_{T}$ is symmetric with respect to the natural $L^{2}$-scalar product on spinors if and only
if the vectorial component of the torsion vanishes, $Y\equiv 0$.

Denote by $\mathbb{C}l(M)$ the set of smooth sections of the vector bundle
with fibre over  $x\in M$ the Clifford algebra over the cotangent space of $x$ (a $\mathbb{Z}_{2}$ graded complex algebra
 and $C^{\infty}(M)$-module), we now consider an additional smooth
vector bundle $F$ over $M$ (with $C^{\infty}(M)$-module of smooth sections $W$),  equipped with a connection $\nabla^{F}$,
with corresponding curvature-tensor $R^{F}$. We consider the tensor product vector bundle $S(TM)\otimes F$
 equiped with the compound connection:
 \begin{equation}
\nabla^{ S(TM)\otimes F}= \nabla^{ S(TM)}\otimes \rm{id}_{ F}+ \rm{id}_{ S(TM)}\otimes \nabla^{F}.
\end{equation}
The corresponding twisted Dirac operator $D_{F}$ is locally specified as follows:
 \begin{equation}
D_{F}=\sum_{i}^{n}c(e_{i})\nabla^{ S(TM)\otimes F}_{e_{i}},
\end{equation}
where $c(e_{i})$ denotes the Clifford action of the dual
of the vector field on $\mathcal{E}$, which satisfies the relation
$
 c(e_{i})c(e_{j})+c(e_{j})c(e_{i})=-2\langle e_{i}, e_{j}  \rangle.
$
 Then we have the following definition.
\begin{defn}
The Connes type operator  is the first order differential operator on $ \Gamma(M,S(TM)\otimes F)$  given by the formula
 \begin{align}
\widetilde{D}_{F}= \sum_{i}^{n}c(e_{i})\nabla^{ S(TM)\otimes F}_{e_{i}}+\Box\otimes \Phi ,
\end{align}
where $\Box$ denotes the Clifford multiplication by any form,  $\Phi \in  \Gamma(M, {\rm{ End}} (F))$ and $\Phi=\Phi^{*}$.
\end{defn}
It is natural to study the Connes type operator with these extremal properties.
\begin{exam}
Suppose that  $\Box=\gamma$, the grading operator on $S(TM)$, then
 $D_{F}+\gamma \otimes \Phi$  is the Connes type operator.
\end{exam}
\begin{exam} Let $\Box=c(X)\gamma$ be the Clifford multiplication by the vector field $X$,
then $D_{F}+c(X)\gamma \otimes \Phi$  is the
twisted  type operator  in \cite{MNZ}.
\end{exam}
\begin{exam} Let $\Box=\sqrt{-1}c(T)\gamma$ be the Clifford multiplication by the 3-form $T$ and the grading operator $\gamma$ on $S(TM)$,
then $D_{F}+\sqrt{-1}c(T)\gamma \otimes \Phi$  is the twisted  type operator with torsion.
\end{exam}
\begin{exam}
Assume  $\Box=(c(T)+\sqrt{-1}c(Y))$ with  $c(T)=\sum _{1\leq\alpha<\beta<\gamma\leq n}
 T_{\alpha \beta \gamma}c(e_{\alpha})c(e_{\beta})c(e_{\gamma})$ and the Clifford multiplication by the vector field $Y$ is skew-adjoint, then
$D_{F}+(c(T)+\sqrt{-1}c(Y)) \otimes \Phi$  is the general twisted Dirac operator with torsion.
\end{exam}

\section{Trilinear functional  for the  Connes type operator with torsion}

In this section we want to consider the trilinear functional  for the  Connes type operator with torsion.
To avoid technique terminology we only state our results for
compact oriented Riemannian manifold of even dimension $n=2m$ by using the trace of  Connes type operator and the noncommutative residue density.
\begin{defn}\cite{DSZ}
For $D_{F}$ given by (2.7), the trilinear Clifford multiplication by functional of differential one-forms
$c(u),c(v),c(w)$
\begin{align}
\mathscr{T}(c(u),c(v),c(w))={\rm Wres}\big(c(u)c(v)c(w)(D_{F}+\Box\otimes \Phi )^{-n+1}\big)
\end{align}
is called torsion functional associated with the Connes type operator.
\end{defn}
We divide the computation of the spectral torsion associated with Connes operator into two steps, according to the dimension $n$.

{\bf  Step-1}. Explicit representation of the symbols of $D_{F}^{2}$ and $\widetilde{D}_{F}^2$.

Let  $\{e_{i}\}(1\leq i,j\leq n)$ $(\{\partial_{i}\})$ be the orthonormal
frames (natural frames respectively ) on  $TM$, then
 \begin{equation}
D_{F}=\sum_{i,j}g^{ij}c(\partial_{i})\nabla^{ S(TM)\otimes F}_{\partial_{j}}=\sum_{j}^{n}c(e_{j})\nabla^{ S(TM)\otimes F}_{e_{j}},
\end{equation}
where $\nabla^{S(TM)\otimes F}_{\partial_{j}}=\partial_{j}+\sigma_{j}^{s}+\sigma_{j}^{F}$ and
$\sigma_{j}^{s}=\frac{1}{4}\sum_{j,k}\langle \nabla^{L}_{\partial_{i}}e_{j}, e_{k}\rangle c(e_{j})c(e_{k})$,
$\sigma_{j}^{F}$ is the connection  matrix of $\nabla^{F}$.

In what follows, using the Einstein sum convention for repeated index summation:
$\partial^j=g^{ij}\partial_i$,$\sigma^i=g^{ij}\sigma_{j}$, $\Gamma_{k}=g^{ij}\Gamma_{ij}^{k}$,
 and we omit the summation sign, from (6a) in \cite{Ka}, we have
\begin{align}
D_{F}^{2}=&-g^{ij}\partial_{i}\partial_{j}-2\sigma^{j}_{S(TM)\otimes F}\partial_{j}+\Gamma^{k}\partial_{k}
                        -g^{ij}\Big[\partial_{i}(\sigma^{j}_{S(TM)\otimes F}) +\sigma^{i}_{S(TM)\otimes F}\sigma^{j}_{S(TM)\otimes F}
                  \nonumber\\
                  & -\Gamma_{ij}^{k}\sigma_{S(TM)\otimes F}^{k}\Big]+\frac{1}{4}s+\frac{1}{2}\sum_{i\neq j} R^{F}(e_{i},e_{j})c(e_{i})c(e_{j}),
\end{align}
where $s$ is the scaler curvature.
By (3.3) and direct calculation, we obtain
 \begin{align}
\widetilde{D}_{F}^2=&\Big(D_{F}+\Box\otimes \Phi \Big)^{2}\nonumber\\
=&D_{F}^{2}+D_{F}(\Box\otimes \Phi)+\Box\otimes \Phi D_{F}+\big(\Box\otimes \Phi\big)^{2}\nonumber\\
=&-g^{ij}\partial_{i}\partial_{j}+\Gamma^{k}\partial_{k}
   +g^{ij}\big[-2\sigma_{i}^{ S(TM)\otimes F}
   +c(\partial_{i})\big(\Box \otimes \Phi\big) +\big(\Box\otimes \Phi \big)c(\partial_{i})   \big]\partial_{j}\nonumber\\
    &+g^{ij}\big[-\partial_{i}(\sigma_{j}^{ S(TM)\otimes F} )-\sigma_{i}^{ S(TM)\otimes F}\sigma_{j}^{ S(TM)\otimes F}
      + \Box\otimes \Phi c(\partial_{i})\sigma_{j}^{ S(TM)\otimes F}\nonumber\\
&+ \Gamma^{k}_{ij}\sigma_{k}^{ S(TM)\otimes F}
+c(\partial_{i})\partial_{j}(\Box \otimes \Phi)
    +c(\partial_{i})\sigma_{j}^{ S(TM)\otimes F}(\Box \otimes \Phi)\big]\nonumber\\
&+\Box^{2}\otimes \Phi^{2}+\frac{1}{4}s+\frac{1}{2}\sum_{i\neq j}R^{F}(e_{i},e_{j})c(e_{i})c(e_{j}).
\end{align}

{\bf  Step-2}. Explicit representation  of $\sigma_{-n}(c(u)c(v)c(w)\widetilde{D}_{F}^{-n+1})
=\sigma_{-n}\big(c(u)c(v)c(w)(D_{F}+\Box\otimes \Phi )^{-n+1}\big)$.

Let the cotangent vector $\xi=\sum \xi_jdx_j$ and
$\xi^j=g^{ij}\xi_i$.
Decompose the operators $\widetilde{D}_{F}^2$ and $\widetilde{D}_{F}$ by different orders as
\begin{align}
\sigma(\widetilde{D}_{F}^{2})=\sigma_{2}(\widetilde{D}_{F}^{2})
+\sigma_{1}(\widetilde{D}_{F}^{2})+\sigma_{0}(\widetilde{D}_{F}^{2});~~
 \sigma(\widetilde{D}_{F}^{1})=\sigma_{1}(\widetilde{D}_{F}^{1})+\sigma_{0}(\widetilde{D}_{F}^{1}).
\end{align}
Note that the arguments in the proof of Lemma 3.1 in \cite{WW1} can be used to compute
the symbolic representation for the Connes type operator without any change, where the same conclusion was derived for Dirac operator.
By (3.4), (3.5) and Lemma 3.1 \cite{WW1}, we get
 \begin{align}
\sigma_{-2}(\widetilde{D}_{F}^{-2})=|\xi|^{-2},~~\sigma_{-(2m-2)}(\widetilde{D}_{F}^{-(2m-2)})=(|\xi|^2)^{1-m},
~~\sigma_{-2m+1}(\widetilde{D}_{F}^{-2m+1})=\frac{\sqrt{-1}c(\xi)}{|\xi|^{2m}}.
\end{align}
Then by (3.8) in \cite{WW1}, we have
 \begin{align}
 \sigma_{-1-2m}(\widetilde{D}_{F}^{-2m})=&m\sigma_2(\widetilde{D}_{F}^2)^{(-m+1)}\sigma_{-3}(\widetilde{D}_{F}^{-2}) \nonumber\\
 &-\sqrt{-1} \sum_{k=0}^{m-2}\sum_{\mu=1}^{2m+2}
\partial_{\xi_{\mu}}\sigma_{2}^{-m+k+1}(\widetilde{D}_{F}^2)
\partial_{x_{\mu}}\sigma_{2}^{-1}(\widetilde{D}_{F}^2)(\sigma_2(\widetilde{D}_{F}^2))^{-k}.
\end{align}
In the same way we obtain
\begin{align}
 \sigma_{-2m}(\widetilde{D}_{F}^{1-2m})=&\sigma_{-2m}(\widetilde{D}_{F}^{-2m}\cdot \widetilde{D}_{F})
=\Big\{\sum_{|\alpha|=0}^{+\infty}(-\sqrt{-1})^{|\alpha|}
\frac{1}{\alpha!}\partial^\alpha_\xi[\sigma(\widetilde{D}_{F}^{-2m})]\partial^\alpha_x[\sigma(\widetilde{D}_{F})]\Big\}_{-2m}   \nonumber\\
 =&\sigma_{-2m}(\widetilde{D}_{F}^{-2m})\sigma_0(\widetilde{D}_{F})+\sigma_{-2m-1}(\widetilde{D}_{F}^{-2m})\sigma_1(\widetilde{D}_{F})\nonumber\\
&+\sum_{|\alpha|=1}(-\sqrt{-1})
\partial^\alpha_\xi[\sigma_{-2m}(\widetilde{D}_{F}^{-2m})]\partial^\alpha_x[\sigma_1(\widetilde{D}_{F})]   \nonumber\\
 =&|\xi|^{-2m}\sigma_0(\widetilde{D}_{F})+\sum_{j=1}^{2m+2}\partial_{\xi_j}(|\xi|^{-2m})\partial_{x_j}c(\xi)+
\Big[m\sigma_2(\widetilde{D}_{F}^2)^{-m+1}\sigma_{-3}(\widetilde{D}_{F}^{-2})  \nonumber\\
&-\sqrt{-1} \sum_{k=0}^{m-2}\sum_{\mu=1}^{2m+2}
\partial_{\xi_{\mu}}\sigma_{2}^{-m+k+1}(\widetilde{D}_{F}^2)
\partial_{x_{\mu}}\sigma_{2}^{-1}(\widetilde{D}_{F}^2)(\sigma_2(\widetilde{D}_{F}^2))^{-k}\Big ]\sqrt{-1}c(\xi).
\end{align}
Also, straightforward computations yield
\begin{align}
 &\sigma_{-2m}(c(u)c(v)c(w)\widetilde{D}_{F}^{1-2m})\nonumber\\
=&c(u)c(v)c(w)\Big\{|\xi|^{-2m}\sigma_0(\widetilde{D}_{F})+\sum_{j=1}^{2m+2}\partial_{\xi_j}(|\xi|^{-2m})\partial_{x_j}c(\xi)+
\Big[m\sigma_2(\widetilde{D}_{F}^2)^{-m+1}\sigma_{-3}(\widetilde{D}_{F}^{-2})  \nonumber\\
&-\sqrt{-1} \sum_{k=0}^{m-2}\sum_{\mu=1}^{2m+2}
\partial_{\xi_{\mu}}\sigma_{2}^{-m+k+1}(\widetilde{D}_{F}^2)
\partial_{x_{\mu}}\sigma_{2}^{-1}(\widetilde{D}_{F}^2)(\sigma_2(\widetilde{D}_{F}^2))^{-k}\Big ]\sqrt{-1}c(\xi)\Big\}.
\end{align}
Substituting (3.9) into (3.1) yields the torsion functional for the Connes type operator with the trilinear
Clifford multiplication by functional of differential one-forms
$c(u),c(v),c(w)$ as follows:
\begin{align}
\mathscr{T}(c(u),c(v),c(w))=\int_M\int_{|\xi|=1}{\rm Tr}_{S(TM)\otimes F}
[\sigma_{-2m}\big(c(u)c(v)c(w)(D_{F}+\Box\otimes \Phi )^{-2m+1}\big)]\sigma(\xi){\rm d}x.
\end{align}

\section{The computation of the spectral torsion for the Connes type operator}

In this section, we develop the several examples of generalized torsion functional for Connes operators with the trilinear Clifford multiplication
by functional of differential one-forms. For these perturbed Connes operators with torsion, we obtain the torsion functional representation
 by calculating the traces of such operators and the noncommutative residue density.

\subsection{Spectral torsion for the Connes type operator with 3-form}
We will assume in this section that $\Box\otimes \Phi=(c(T)+\sqrt{-1}c(Y))\otimes \Phi $, where $c(T)=\sum _{1\leq\alpha<\beta<\gamma\leq n}
 T_{\alpha \beta \gamma}c(e_{\alpha})\times c(e_{\beta})c(e_{\gamma})$ and the Clifford multiplication by the vector field $Y$ is skew-adjoint,
 and $\Phi \in  \Gamma(M, {\rm{ End}} (F))$.
Substituting  $\Box=c(T)+\sqrt{-1}c(Y) $ into (3.4) yields
 \begin{align}
&\Big(D_{F}+(c(T)+\sqrt{-1}c(Y))\otimes \Phi \Big)^{2}\nonumber\\
=&D_{F}^{2}+D_{F}(c(T)+\sqrt{-1}c(Y))\otimes \Phi+(c(T)+\sqrt{-1}c(Y))\otimes \Phi D_{F}+\big((c(T)+\sqrt{-1}c(Y))\otimes \Phi\big)^{2}\nonumber\\
=&-g^{ij}\partial_{i}\partial_{j}
   +\Gamma^{k}\partial_{k}+g^{ij}\big[-2\sigma_{i}^{ S(TM)\otimes F}+c(\partial_{i})\big((c(T)+\sqrt{-1}c(Y)) \otimes \Phi\big)\nonumber\\
   & +\big((c(T)+\sqrt{-1}c(Y))\otimes \Phi \big)c(\partial_{i})
     \big]\partial_{j}\nonumber\\
    &+g^{ij}\big[-\partial_{i}(\sigma_{j}^{ S(TM)\otimes F} )-\sigma_{i}^{ S(TM)\otimes F}\sigma_{j}^{ S(TM)\otimes F}
      + (c(T)+\sqrt{-1}c(Y))\otimes \Phi c(\partial_{i})\sigma_{j}^{ S(TM)\otimes F}\nonumber\\
&+ \Gamma^{k}_{ij}\sigma_{k}^{ S(TM)\otimes F}
+c(\partial_{i})\partial_{j}((c(T)+\sqrt{-1}c(Y)) \otimes \Phi)
    +c(\partial_{i})\sigma_{j}^{ S(TM)\otimes F}((c(T)+\sqrt{-1}c(Y)) \otimes \Phi)\big]\nonumber\\
&+(c(T)+\sqrt{-1}c(Y))^{2}\otimes \Phi^{2}+\frac{1}{4}s+\frac{1}{2}\sum_{i\neq j}R^{F}(e_{i},e_{j})c(e_{i})c(e_{j}).
\end{align}
The arguments from hierarchical approach
seems very useful to understand the main symbol for the Connes type operator.
We first sketch the $-n$ order symbolic representation of $c(u)c(v)c(w)(D_{F}+(c(T)+\sqrt{-1}c(Y)) \otimes \Phi$.
Then we get the  spectral torsion for Connes operator   $D_{F}+(c(T)+\sqrt{-1}c(Y)) \otimes \Phi $
 by calculating the traces of such operators.
The key point in our case is to use  the method of Section 3 to get the noncommutative residue density
of a suitable power of the Laplace type operator multiplied by 3-form.

{\bf  (1)}. The representation of $\sigma_{-n}\big(c(u)c(v)c(w)(D_{F}+(c(T)+\sqrt{-1}c(Y)) \otimes \Phi )^{-n+1}\big)$.

Write the Dirac operators $D^2$ and $D^{-1}$ by different orders as
  \begin{equation}
D_x^{\alpha}=(-\sqrt{-1})^{|\alpha|}\partial_x^{\alpha};~\sigma(D^{2})=p_2+p_1+p_0;
~\sigma(D^{-1})=\sum^{\infty}_{j=1}q_{-j}.
\end{equation}
By the composition formula of psudodifferential operators, we obtain
\begin{align}
1=&\sigma(D^{2}\circ D^{-2})=\sum_{\alpha}\frac{1}{\alpha!}\partial^{\alpha}_{\xi}[\sigma(D^{2})]D^{\alpha}_{x}[\sigma(D^{-2})]\nonumber\\
=&(p_2+p_1+p_0)(q_{-2}+q_{-3}+q_{-4}+\cdots)\nonumber\\
&+\sum_j(\partial_{\xi_j}p_2+\partial_{\xi_j}p_1+\partial_{\xi_j}p_0)(
D_{x_j}q_{-2}+D_{x_j}q_{-3}+D_{x_j}q_{-4}+\cdots)\nonumber\\
&+\sum_{i,j}(\partial_{\xi_i}\partial_{\xi_j}p_2+\partial_{\xi_i}\partial_{\xi_j}p_1+\partial_{\xi_i}\partial_{\xi_j}p_0)(
D_{x_i}D_{x_j}q_{-2}+D_{x_i}D_{x_j}q_{-3}+D_{x_i}D_{x_j}q_{-4}+\cdots)\nonumber\\
=&p_2q_{-2}+(p_1q_{-2}+p_2q_{-3}+\sum_j\partial_{\xi_j}p_2D_{x_j}q_{-2})\nonumber\\
   &+(p_0q_{-2}+p_1q_{-3}+p_2q_{-4}+\sum_j\partial_{\xi_j}p_2D_{x_j}q_{-3}+\sum_{i,j}\partial_{\xi_i}\partial_{\xi_j}p_2
  D_{x_i}D_{x_j}q_{-2})   +\cdots .
\end{align}
Then comparing the same order terms on both side of (4.3) gives
\begin{align}
q_{-2}=&p_1^{-2};  \\
q_{-3}=&-p_2^{-1}\Big[p_1q_{-2}+\sum_j\partial_{\xi_j}p_2D_{x_j}q_{-2}\Big].
\end{align}
From (4.1),(4.4), (4.5) and Lemma 1 in \cite{Wa3}, we obtain
\begin{align}
&\sigma_{-3}\Big((D_{F}+(c(T)+\sqrt{-1}c(Y))\otimes \Phi)^{-2}\Big)\nonumber\\
=&-\sqrt{-1}|\xi|^{-4}\xi_k(\Gamma^k-2\delta^k)-\sqrt{-1}|\xi|^{-6}2\xi^j\xi_\alpha\xi_\beta
    \partial_jg^{\alpha\beta}-|\xi|^{-4}\Gamma^{k}\partial_{k}\nonumber\\
    &-|\xi|^{-4}g^{ij}\big[-2\sigma_{i}^{ S(TM)\otimes F}
   +c(\partial_{i})\big((c(T)+\sqrt{-1}c(Y)) \otimes \Phi\big)\nonumber\\
    &+\big((c(T)+\sqrt{-1}c(Y))\otimes \Phi \big)c(\partial_{i})   \big]\sqrt{-1}\xi_{j}.
\end{align}

For any fixed point $x_0\in M$, we can choose the normal coordinates
$U$ of $x_0$ in $M$ and compute $\sigma_{-n}\big(c(u)c(v)c(w)(D_{F}+(c(T)+\sqrt{-1}c(Y)) \otimes \Phi )^{-n+1}\big)$.
Let $\{E_1,\cdots,E_n\}$ be the canonical basis of $\mathbb{R}^n$ and $c(E_i)\in {\rm cl}_{\bf C}(n)\cong {\rm Hom}(\wedge^*
_{\bf C}(\frac{n}{2}),\wedge^* _{\bf C}(\frac{n}{2}))$ be the
Clifford action. Then
\begin{align}
c(e_i)=[(\sigma,c(E_i))];~ c( e_i)[(\sigma,f_i)]=[(\sigma,c(E_i)f_i)];~
\frac{\partial}{\partial x_i}=[(\sigma,\frac{\partial}{\partial
x_i})],
\end{align}
then we have $ \partial_{x_j}(c( e_i))=0$ in the above frame. In terms of normal coordinates about $x_{0}$ one has:
$\sigma^{j}_{S(TM)}(x_{0})=0$ $e_{j}\big(c(e_{i})\big)(x_{0})=0$, $\Gamma^{k}(x_{0})=0$.
 \begin{lem} In the normal coordinates $U$ of $x_0$ in $M$,
 \begin{align}
 \partial_{x_j}(\sigma_{-2}(\widetilde{D}_{F}^{-2}))(x_0)=0,\partial_{x_j}(c(\xi))(x_0)=0.
\end{align}
\end{lem}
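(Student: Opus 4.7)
The plan is to exploit the two standard simplifications that normal coordinates at $x_{0}$ afford, combined with the frame convention fixed in equation (3.10): the metric is Euclidean to first order, and the Clifford symbols $c(e_{i})$ are locally constant matrices in the chosen trivialization. Both identities in the lemma then reduce to differentiating something whose coefficients are either (a) entries of $g^{ij}$ whose first derivatives vanish at $x_{0}$, or (b) constants.

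First I would record the local-geometry facts that will drive everything: in normal coordinates $U$ around $x_{0}$ one has $g^{ij}(x_{0})=\delta^{ij}$, $\partial_{x_{k}}g^{ij}(x_{0})=0$, $\Gamma^{k}_{ij}(x_{0})=0$, and $\sigma^{j}_{S(TM)}(x_{0})=0$, together with the already-stated identity $\partial_{x_{j}}(c(e_{i}))(x_{0})=0$ that follows from the trivialization (3.7). These are precisely the ingredients used repeatedly in \cite{Wa3,WW1,WW2}.

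For the first identity, I would substitute the explicit formula $\sigma_{-2}(\widetilde{D}_{F}^{-2})=|\xi|^{-2}=(g^{ij}(x)\xi_{i}\xi_{j})^{-1}$ obtained in (3.6) and compute
\begin{equation*}
\partial_{x_{k}}\bigl(g^{ij}(x)\xi_{i}\xi_{j}\bigr)^{-1}
=-\bigl(g^{ij}(x)\xi_{i}\xi_{j}\bigr)^{-2}\,\bigl(\partial_{x_{k}}g^{ij}\bigr)(x)\,\xi_{i}\xi_{j}.
\end{equation*}
Evaluating at $x_{0}$ and using $\partial_{x_{k}}g^{ij}(x_{0})=0$ gives zero. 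For the second identity, I would expand $c(\xi)=\sum_{i}\xi_{i}\,c(dx^{i})$ and write $c(dx^{i})$ in the constant orthonormal-frame trivialization of (3.7)--(3.10): since the transition matrix between $\{dx^{i}\}$ and the orthonormal coframe $\{e^{i}\}$ at $x_{0}$ is the identity with first derivatives proportional to $\partial g^{ij}(x_{0})=0$, one has $\partial_{x_{j}}(c(dx^{i}))(x_{0})=0$; combined with $\partial_{x_{j}}(c(e_{i}))(x_{0})=0$ this yields $\partial_{x_{j}}(c(\xi))(x_{0})=0$ directly, since the cotangent components $\xi_{i}$ are the independent fibre coordinates and carry no $x$-dependence.

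The only subtlety I expect is bookkeeping: one must be clear that $\xi_{i}$ are cotangent fibre variables (constant in $x$) while the $x$-dependence enters solely through $g^{ij}$ and through the Clifford matrices in the coordinate frame. Once this distinction is made, both claims reduce to the vanishing of $\partial g^{ij}(x_{0})$ and to the trivialization $\partial_{x_{j}}c(e_{i})(x_{0})=0$; no further computation is required, so the lemma is essentially a consequence of the normal-coordinate setup carried over verbatim from \cite{WW1}.
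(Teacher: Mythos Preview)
Your proposal is correct and follows essentially the same approach as the paper's own proof: both arguments compute $\partial_{x_j}|\xi|^{-2}$ via $\partial_{x_j}g^{ij}(x_0)=0$ for the first identity, and for the second expand $c(\xi)=\sum_l \xi_l\, c(dx^l)$, pass to the orthonormal frame, and invoke $\partial_{x_j}(c(e_\alpha))(x_0)=0$ together with the vanishing at $x_0$ of the derivatives of the transition coefficients $(dx^l,e_\alpha)$. The only discrepancy is in your cross-references (the trivialization you cite as ``(3.7)--(3.10)'' is the paper's (4.7) and the surrounding discussion), but the mathematics is identical.
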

 \begin{proof}
By (3,6), we get
 \begin{align}
 \partial_{x_j}(\sigma_{-2}(\widetilde{D}_{F}^{-2}))(x_0)= \partial_{x_j}(|\xi|^{-2})(x_0)= -2|\xi|^{-4} \partial_{x_j}(|\xi|^{2})(x_0)=0.
\end{align}
Also, straightforward computations yield
 \begin{align}
\partial_{x_j}(c(\xi))(x_0)=&\sum_{l=1}^{n}\partial_{x_j}(\xi_{l}c(dx^{l}))(x_0)
=\sum_{l=1}^{n}\xi_{l}\partial_{x_l}\big(c((dx^{l},e_{\alpha})e_{\alpha})\big)(x_0)\nonumber\\
=&\sum_{l=1}^{n}\xi_{l}\Big(\partial_{x_j}\big((dx^{l},e_{\alpha})\big)c(e_{\alpha})
+\sum_{l=1}^{n}\xi_{l}(dx^{l},e_{\alpha})\partial_{x_j}\big(c(e_{\alpha})\big)\Big)(x_0)=0.
\end{align}
\end{proof}
 \begin{lem} For the connection $\nabla^{F}$ on $F$, we choose a local frame of $F$ around $x_0$ in $M$,
 such that $ \sigma^{j}_{F}(x_{0})=0$.
\end{lem}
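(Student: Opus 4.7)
The plan is to construct the desired frame by radial parallel transport, a standard synchronous-gauge construction that is the bundle analogue of normal coordinates. Since the statement is pointwise at $x_0$, it suffices to find \emph{some} smooth local frame $\{f_1,\dots,f_r\}$ of $F$ on a neighborhood $U$ of $x_0$ such that the connection 1-form $\sigma_F = (\sigma_F^j\,dx^j)$ expressed in this frame satisfies $\sigma_F^j(x_0)=0$ for all $j$.

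First, I would shrink $U$ so that the normal coordinate chart of $M$ at $x_0$ covers it and every point $x\in U$ is joined to $x_0$ by a unique radial geodesic $\gamma_x(t)=tx$, $t\in[0,1]$. Pick any basis $\{f_1^0,\dots,f_r^0\}$ of the fiber $F_{x_0}$. Define $f_\alpha(x)$ to be the vector obtained by parallel transporting $f_\alpha^0$ along $\gamma_x$ using $\nabla^F$; equivalently, $f_\alpha$ is the unique solution on $\gamma_x$ of the ODE $\nabla^F_{\dot\gamma_x(t)}f_\alpha=0$ with initial value $f_\alpha^0$ at $t=0$. Smooth dependence of ODE solutions on parameters ensures that $\{f_1,\dots,f_r\}$ is a smooth local frame on $U$.

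Next, I would unpack what this says about the connection matrix. In the frame $\{f_\alpha\}$, write $\nabla^F f_\alpha = (\sigma_F)^\beta_\alpha\, f_\beta$, so $(\sigma_F)^\beta_\alpha = (\sigma_F^j)^\beta_\alpha\,dx^j$. By construction, along each radial geodesic $\gamma_x(t)=tx$ we have $\nabla^F_{\dot\gamma_x}f_\alpha\equiv 0$, which in coordinates reads $x^j(\sigma_F^j)^\beta_\alpha(tx)=0$ for all $t\in[0,1]$ and all $x\in U$. Differentiating this identity in $x^k$ at $x=0$, or equivalently evaluating at $t=0$ while letting $x$ range over all directions, yields $(\sigma_F^k)^\beta_\alpha(x_0)=0$ for every $k$, which is the claim.

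The only mildly subtle step is the last one: one needs to be sure that vanishing of the radial contraction $x^j\sigma_F^j$ on an open neighborhood forces each $\sigma_F^j$ to vanish at the origin. This follows by choosing $x=tE_k$ and letting $t\to 0$, so this is truly a routine extraction and I do not expect any real obstacle. The lemma amounts to the standard fact that a linear connection always admits a synchronous (radial) gauge at a point.
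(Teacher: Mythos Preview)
Your argument is correct and is precisely the standard radial (synchronous) gauge construction. The paper itself gives no proof of this lemma at all; it is simply stated as a well-known fact and used immediately afterward. So there is nothing to compare against: you have supplied the omitted justification, and your reasoning---parallel transport along radial geodesics, then extracting $\sigma_F^k(x_0)=0$ from the vanishing radial contraction---is the canonical way to see it.
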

Substituting above results into the formula (3.9) gives the desired result
\begin{align}
 &\sigma_{-2m}(c(u)c(v)c(w)\widetilde{D}_{F}^{1-2m})(x_{0})|_{|\xi|=1} \nonumber\\
=&c(u)c(v)c(w)(c(T)+\sqrt{-1}c(Y)) \otimes \Phi
+mg^{ij}c(u)c(v)c(w)\Big[c(\partial_{i})\big((c(T)+\sqrt{-1}c(Y)) \otimes \Phi\big) \nonumber\\
&+\big((c(T)+\sqrt{-1}c(Y))\otimes \Phi \big)c(\partial_{i})   \Big]\xi_{j}c(\xi).
\end{align}

{\bf  (2)}. Spectral torsion for the Connes type operator   $D_{F}+(c(T)+\sqrt{-1}c(Y)) \otimes \Phi $

Let $u=\sum_{i=1}^{n}u_{i}e_{i}$, $v=\sum_{j=1}^{n}v_{j}e_{j} $, $w=\sum_{l=1}^{n}w_{l}e_{l} $,
where $\{e_{1},e_{2},\cdots,e_{n}\}$ is the orthogonal basis about $g^{TM}$,
then  $c(u)= \sum_{i=1}^{n}u_{i}c(e_{i})$,$c(v)= \sum_{j=1}^{n}v_{j}c(e_{j})$, $c(w)= \sum_{l=1}^{n}w_{l}c(e_{l})$.
We collect here, for the reader¡¯s convenience, all necessary facts on the calculation of noncommutative residue density.
 \begin{lem}
The following identities hold:
 \begin{align}
 &{\rm{Tr}}\big(c(u)c(v)c(w)c(Y)\big)=[g(v,w)g(u,Y)-g(u,w)g(v,Y)+g(u,v)g(w,Y)]{\rm{Tr}}(\rm{Id}) ;\\
 &{\rm{Tr}}\big(c(u)c(v)c(w)c(T)\big)=T(u,v,w){\rm{Tr}}(\rm{Id}).
\end{align}
\end{lem}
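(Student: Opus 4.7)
The plan is to derive both identities directly from the Clifford anticommutation relation $c(e_i)c(e_j)+c(e_j)c(e_i)=-2\langle e_i,e_j\rangle$ together with the multilinearity of both sides in their vector arguments. Expanding $c(u),c(v),c(w),c(Y)$ in the orthonormal frame reduces everything to traces of products of basis Clifford generators, and the standard recursive anticommutation argument (combined with cyclicity of the trace) yields
\begin{align*}
\mathrm{Tr}(c(e_i)c(e_j))&=-g(e_i,e_j)\,\mathrm{Tr}(\mathrm{Id}),\\
\mathrm{Tr}(c(e_i)c(e_j)c(e_k)c(e_l))&=\bigl[g(e_i,e_j)g(e_k,e_l)-g(e_i,e_k)g(e_j,e_l)+g(e_i,e_l)g(e_j,e_k)\bigr]\mathrm{Tr}(\mathrm{Id}),
\end{align*}
together with the fact that the trace of a product of Clifford generators is a Wick-type sum over pair contractions, so that any trace whose indices do not pair completely vanishes.

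For the first identity I would substitute $u=\sum u_i e_i$, $v=\sum v_j e_j$, $w=\sum w_k e_k$, $Y=\sum Y_l e_l$ into the four-factor formula above; the three Kronecker contractions immediately assemble into $g(v,w)g(u,Y)-g(u,w)g(v,Y)+g(u,v)g(w,Y)$, which is the claimed right-hand side.

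For the second identity I would first invoke the decomposition of a triple Clifford product into its totally antisymmetric piece plus contractions,
\begin{equation*}
c(u)c(v)c(w)=c(u\wedge v\wedge w)-g(u,v)c(w)+g(u,w)c(v)-g(v,w)c(u),
\end{equation*}
and then multiply by $c(T)=\sum_{\alpha<\beta<\gamma}T_{\alpha\beta\gamma}c(e_\alpha)c(e_\beta)c(e_\gamma)$. Each of the three contraction terms produces a four-factor trace of the form $\mathrm{Tr}(c(e_l)c(e_\alpha)c(e_\beta)c(e_\gamma))$ with $\alpha<\beta<\gamma$ pairwise distinct, which vanishes by the four-factor formula above. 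Only $\mathrm{Tr}(c(u\wedge v\wedge w)c(T))$ survives. Expanding $u\wedge v\wedge w$ in the basis $\{e_\alpha\wedge e_\beta\wedge e_\gamma\}_{\alpha<\beta<\gamma}$ and applying the Wick-type vanishing, the six-factor trace $\mathrm{Tr}(c(e_\alpha)c(e_\beta)c(e_\gamma)c(e_{\alpha'})c(e_{\beta'})c(e_{\gamma'}))$ vanishes unless $\{\alpha,\beta,\gamma\}=\{\alpha',\beta',\gamma'\}$ (i.e.\ the two increasing triples coincide), in which case a direct anticommutation computation using $c(e_i)^2=-1$ yields exactly $\mathrm{Tr}(\mathrm{Id})$. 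The remaining sum collapses to $\sum_{\alpha<\beta<\gamma}T_{\alpha\beta\gamma}(u\wedge v\wedge w)_{\alpha\beta\gamma}\,\mathrm{Tr}(\mathrm{Id})=T(u,v,w)\,\mathrm{Tr}(\mathrm{Id})$, which is the required identity. The only nontrivial technical point is the sign bookkeeping in the six-factor trace and the normalization for the matched case; both are routine but must be handled carefully to avoid an overall sign error.
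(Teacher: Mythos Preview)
Your proposal is correct. For the first identity your argument coincides with the paper's: both expand in the orthonormal frame and apply the four-factor trace formula
\[
\mathrm{Tr}\bigl(c(e_i)c(e_j)c(e_k)c(e_l)\bigr)=(-\delta_i^k\delta_j^l+\delta_i^l\delta_j^k+\delta_i^j\delta_k^l)\,\mathrm{Tr}(\mathrm{Id})
\]
and contract.

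For the second identity you take a genuinely different route. The paper proceeds by a brute-force six-factor computation: it writes out $\mathrm{Tr}\bigl(c(e_\alpha)c(e_\beta)c(e_\gamma)c(e_j)c(e_k)c(e_l)\bigr)$, uses cyclicity of the trace to bring $c(e_l)$ to the right end, and then anticommutes it back through the remaining five factors; this returns the original trace with the opposite sign plus a string of Kronecker-delta correction terms, and solving $\mathrm{Tr}=-\mathrm{Tr}+2(\cdots)$ yields the answer after simplification by the antisymmetry of $T$. Your argument instead exploits the structural decomposition $c(u)c(v)c(w)=c(u\wedge v\wedge w)-g(u,v)c(w)+g(u,w)c(v)-g(v,w)c(u)$: the three contraction pieces die immediately against $c(T)$ by the four-factor formula (since $\alpha,\beta,\gamma$ are distinct), and what remains is the diagonal pairing of $3$-form coefficients. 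Your method is more conceptual and avoids the lengthy delta bookkeeping; the paper's method is purely mechanical and never invokes the wedge/Clifford correspondence. Both ultimately rest on the same underlying fact that traces of Clifford words of nonzero degree vanish, so the sign and normalization caveat you flag is indeed the only place requiring care.
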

\begin{proof}
By the relation of the Clifford action and $ {\rm{Tr}}(AB)= {\rm{Tr}}(BA) $, we have the equality:
 \begin{align}
 {\rm{Tr}}\big(c(u)c(v)c(w)c(Y)\big)=&\sum_{i,j,k,l=1}^{n}u_{i}v_{j}w_{k}Y_{l}{\rm{Tr}}\big(c(e_{i})c(e_{j})c(e_{k})c(e_{l})\big)\nonumber\\
           =&\sum_{i,j,k,l=1}^{n}u_{i}v_{j}w_{k}Y_{l}(-\delta_{i}^{k}\delta_{j}^{l}+\delta_{i}^{l}\delta_{j}^{k}
           +\delta_{i}^{j}\delta_{k}^{l}){\rm{Tr}}(\rm{Id})\nonumber\\
           =&\sum_{i,j=1}^{n}(-u_{i}v_{j}w_{i}Y_{j} +u_{i}v_{j}w_{j}Y_{i}+u_{i}v_{i}w_{k}Y_{k}){\rm{Tr}}(\rm{Id})\nonumber\\
  =&[g(v,w)g(u,Y)-g(u,w)g(v,Y)+g(u,v)g(w,Y)]{\rm{Tr}}(\rm{Id}) .
\end{align}
Applying   the Clifford action  property  again, we get a cyclic representation of the trace relation
 \begin{align}
 &{\rm{Tr}}\big(c(u)c(v)c(w)c(T)\big)\nonumber\\
 =&\sum_{1 \leq  j,k,l\leq n}\sum_{\alpha,\beta,\gamma=1}^{n}T(e_{j},e_{k},e_{l})u_{\alpha}v_{\beta}w_{\gamma}
{\rm{Tr}}\big(c(e_{\alpha})c(e_{\beta})c(e_{\gamma})c(e_{j})c(e_{k})c(e_{l})\big)\nonumber\\
   =&\sum_{1 \leq  j,k,l\leq n}\sum_{\alpha,\beta,\gamma=1}^{n}T(e_{j},e_{k},e_{l})u_{\alpha}v_{\beta}w_{\gamma}
{\rm{Tr}}\big(c(e_{\alpha})c(e_{\beta})c(e_{\gamma})c(e_{l})c(e_{j})c(e_{k})\big)\nonumber\\
  =&\sum_{1 \leq  j,k,l\leq n}\sum_{\alpha,\beta,\gamma=1}^{n}T(e_{j},e_{k},e_{l})u_{\alpha}v_{\beta}w_{\gamma}
{\rm{Tr}}\Big(c(e_{\alpha})c(e_{\beta})\big(-c(e_{l})c(e_{\gamma})-2\delta_{\gamma}^{l}\big)c(e_{j})c(e_{k})\Big)\nonumber\\
=&\cdots   \nonumber\\
=&-\sum_{1 \leq  j,k,l\leq n}\sum_{\alpha,\beta,\gamma=1}^{n}T(e_{j},e_{k},e_{l})u_{\alpha}v_{\beta}w_{\gamma}
{\rm{Tr}}\big(c(e_{\alpha})c(e_{\beta})c(e_{\gamma})c(e_{j})c(e_{k})c(e_{l})\big)\nonumber\\
&+2\sum_{1 \leq  j,k,l\leq n}\sum_{\alpha,\beta,\gamma=1}^{n}T(e_{j},e_{k},e_{l})u_{\alpha}v_{\beta}w_{\gamma}
\Big(\delta_{\alpha}^{l}\delta_{\beta}^{j }\delta_{\gamma}^{k}-\delta_{\alpha}^{l}\delta_{\beta}^{k}\delta_{\gamma}^{j}
+\delta_{\alpha}^{l}\delta_{\beta}^{\gamma}\delta_{j}^{k}\nonumber\\
&+\delta_{\gamma}^{l}\delta_{\alpha}^{j}\delta_{\beta}^{k}-\delta_{\gamma}^{l}\delta_{\alpha}^{k}\delta_{\beta}^{j}
+\delta_{\gamma}^{l}\delta_{\alpha}^{\beta}\delta_{j}^{k}
-\delta_{\beta}^{l}\delta_{\alpha}^{j}\delta_{\gamma}^{k}+\delta_{\beta}^{l}\delta_{\alpha}^{k}\delta_{\gamma}^{j}
+\delta_{\beta}^{l}\delta_{\alpha}^{\gamma}\delta_{j}^{k} \Big){\rm{Tr}}(\rm{Id}),
\end{align}
By the generating relation, we see that the left hand side of (4.15) equals
 \begin{align}
 &{\rm{Tr}}\big(c(u)c(v)c(w)c(T)\big)\nonumber\\
 =&\sum_{1 \leq  j,k,l\leq n}\sum_{\alpha,\beta,\gamma=1}^{n}T(e_{j},e_{k},e_{l})u_{\alpha}v_{\beta}w_{\gamma}
\Big(\delta_{\alpha}^{l}\delta_{\beta}^{j }\delta_{\gamma}^{k}-\delta_{\alpha}^{l}\delta_{\beta}^{k}\delta_{\gamma}^{j}
+\delta_{\alpha}^{l}\delta_{\beta}^{\gamma}\delta_{j}^{k}\nonumber\\
&+\delta_{\gamma}^{l}\delta_{\alpha}^{j}\delta_{\beta}^{k}-\delta_{\gamma}^{l}\delta_{\alpha}^{k}\delta_{\beta}^{j}
+\delta_{\gamma}^{l}\delta_{\alpha}^{\beta}\delta_{j}^{k}
-\delta_{\beta}^{l}\delta_{\alpha}^{j}\delta_{\gamma}^{k}+\delta_{\beta}^{l}\delta_{\alpha}^{k}\delta_{\gamma}^{j}
+\delta_{\beta}^{l}\delta_{\alpha}^{\gamma}\delta_{j}^{k}  \Big){\rm{Tr}}(\rm{Id})
\nonumber\\
 =&\sum_{1 \leq  j,k,l\leq n}T(e_{j},e_{k},e_{l})\Big(\sum_{ \sigma\in A(3)}sgn(\sigma)
 \sigma(e_{1}^{*}\otimes e_{2}^{*}\otimes e_{3}^{*})  u_{j}v_{k}w_{l}\Big){\rm{Tr}}(\rm{Id})\nonumber\\
  =&T(u,v,w){\rm{Tr}}(\rm{Id}),
\end{align}
and the proof of
the lemma is complete.
\end{proof}
We next make  argument under integral conditions to show  the calculation of traces in several special cases.
According to the integral formula of Case aII in \cite{YW1}, we have
 \begin{equation}
\int_{|\xi|=1}\xi_{i}\xi_{j}\sigma(\xi)=\frac{1}{n}\delta_{i}^{j}{\rm{vol}}(S^{n-1}).
\end{equation}
 \begin{lem}
In terms of the condition (4.17),
the following identities hold:
 \begin{align}
 &\int_{|\xi|=1}{\rm{Tr}}\Big(\sum_{i,j}g^{ij}c(u)c(v)c(w)c(Y)c(\partial_{i})\xi_{j}c(\xi)
+\sum_{i,j}g^{ij}c(u)c(v)c(w)c(\partial_{i})c(Y)\xi_{j}c(\xi) \Big)\sigma(\xi)\nonumber\\
&=- \frac{1}{m}\rm{vol}(S^{2m-1})[g(v,w)g(u,Y)-g(u,w)g(v,Y)+g(u,v)g(w,Y)]{\rm{Tr}}(\rm{Id});\\
 &\int_{|\xi|=1}{\rm{Tr}}\Big(\sum_{i,j}g^{ij}c(u)c(v)c(w)c(T)c(\partial_{i})\xi_{j}c(\xi) \Big)\sigma(\xi)
 =-T(u,v,w){\rm{Tr}}(\rm{Id}){\rm{vol}}(S^{n-1});\\
 &\int_{|\xi|=1}{\rm{Tr}}\Big(\sum_{i,j}g^{ij}c(u)c(v)c(w)c(\partial_{i})c(T)\xi_{j}c(\xi) \Big)\sigma(\xi)
 =-5T(u,v,w){\rm{Tr}}(\rm{Id}){\rm{vol}}(S^{n-1}).
\end{align}
\end{lem}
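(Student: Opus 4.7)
All three identities share the same structure, so I would organize the proof as a single computational pipeline executed three times. The pipeline is: (i) pass to normal coordinates at a fixed point $x_0$, so that $g^{ij}=\delta^{ij}$ and the natural and orthonormal frames agree; (ii) use the defining Clifford relation to simplify each integrand into a scalar function of $\xi$ multiplying a $\xi$-independent Clifford element, exploiting in particular the contractions $\sum_i c(\partial_i)\xi_i = c(\xi)$ and $c(\xi)^2 = -|\xi|^2$; (iii) apply the angular integration formula (4.17) to dispose of any residual $\xi_k\xi_j$ pairs; (iv) invoke Lemma 4.3 to evaluate the final Clifford trace in closed form.

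Identity (4.19) is essentially immediate from this pipeline: the contraction $c(\partial_i)\xi_i \mapsto c(\xi)$ together with $c(\xi)^2 = -1$ on the unit sphere turns the integrand into $-c(u)c(v)c(w)c(T)$, whereupon (4.13) combined with the volume of $S^{n-1}$ delivers the claim. For (4.18), the two summands place $c(\partial_i)$ on opposite sides of $c(Y)$; commuting $c(\partial_i)$ past $c(Y)$ in the second summand via $c(e_i)c(Y)+c(Y)c(e_i)=-2Y_i$ produces a principal term that cancels the first summand and a remainder of the form $-2g(Y,\xi)\,c(u)c(v)c(w)c(\xi)$. Expanding $c(\xi)=\sum_j \xi_j c(e_j)$ and integrating against $\xi_j\xi_k$ via (4.17) replaces this remainder by a scalar multiple of $c(u)c(v)c(w)c(Y)$, after which Lemma 4.3(1) yields the advertised three-term combination in $g(v,w)g(u,Y)$, $g(u,w)g(v,Y)$, and $g(u,v)g(w,Y)$.

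The most delicate case is (4.20), because $c(\partial_i)$ sits to the \emph{left} of $c(T)$, so the contraction produces $c(\xi)c(T)c(\xi)$ rather than $c(T)c(\xi)^2$. The key algebraic identity I would establish is $c(\xi)c(T)+c(T)c(\xi) = -2c(i_\xi T)$, proved by iterating the Clifford anticommutation through the three factors of $c(T)$ and matching the outcome against the explicit expansion of the interior product $i_\xi T$. Using this, on $|\xi|=1$ the integrand rewrites as $c(u)c(v)c(w)\bigl[c(T) - 2c(i_\xi T)c(\xi)\bigr]$. The $c(T)$-piece is handled directly by Lemma 4.3(2) and the volume of $S^{n-1}$. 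For the remaining piece, I would expand $c(i_\xi T) = \sum_\mu \xi_\mu c(i_{e_\mu}T)$, apply (4.17) to collapse $\xi_\mu\xi_j$ to $\tfrac{1}{n}\delta_{\mu j}\mathrm{vol}(S^{n-1})$, and then reduce the resulting contraction $\sum_\mu c(i_{e_\mu}T)c(e_\mu)$ to a scalar multiple of $c(T)$ by repeatedly applying the anticommutation relations on triples of distinct indices (using the cyclic identity $c(e_\beta)c(e_\gamma)c(e_\mu)=c(e_\mu)c(e_\beta)c(e_\gamma)$). Combining the two contributions and invoking (4.13) then yields the claimed multiple of $T(u,v,w)\mathrm{Tr}(\mathrm{Id})\mathrm{vol}(S^{n-1})$. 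The main obstacle I anticipate is sign bookkeeping in these anticommutation steps, especially in pinning down the exact numerical coefficient from the contraction; once that is correct, the rest of the computation is mechanical.
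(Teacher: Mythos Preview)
Your pipeline for (4.18) and (4.19) is essentially what the paper does; the only cosmetic difference is that in (4.19) you invoke $c(\xi)^2=-1$ on the sphere directly, whereas the paper first expands $c(\xi)=\sum_l\xi_l c(e_l)$ and then applies (4.17) to reach the same contraction.

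For (4.20) the two routes differ organizationally. The paper expands everything in frame indices, applies (4.17), and then commutes the resulting $c(e_i)$ leftward through $c(e_\gamma),c(e_\beta),c(e_\alpha)$, using trace cyclicity to close up the two copies of $c(e_i)$; you instead contract $\sum_i c(e_i)\xi_i=c(\xi)$ first and push the left $c(\xi)$ through $c(T)$ via the interior-product identity. Both strategies are sound, but be warned that your plan, executed correctly, will \emph{not} reproduce the stated coefficient $-5$. One finds $\sum_\mu c(i_{e_\mu}T)c(e_\mu)=3\,c(T)$, whence
\[
\int_{|\xi|=1}{\rm Tr}\bigl(c(u)c(v)c(w)c(\xi)c(T)c(\xi)\bigr)\sigma(\xi)
=\Bigl(1-\frac{6}{n}\Bigr)T(u,v,w)\,{\rm Tr}({\rm Id})\,{\rm vol}(S^{n-1})
=\frac{n-6}{n}\,T(u,v,w)\,{\rm Tr}({\rm Id})\,{\rm vol}(S^{n-1}),
\]
a dimension-dependent answer. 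This is not a defect in your argument: in the paper's own computation (4.23) the factor $\tfrac{1}{n}$ coming from (4.17) is silently dropped on the three $\delta$ cross terms, turning what should be $1-\tfrac{2}{n}-\tfrac{2}{n}-\tfrac{2}{n}=\tfrac{n-6}{n}$ into $1-2-2-2=-5$. It is the value $\tfrac{n-6}{n}$, not $-5$, that combines with (4.18) and (4.19) to yield the coefficient $-2^{m+1}$ appearing in Theorem~4.5, so when your bookkeeping disagrees with the lemma as printed you should trust your computation.
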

\begin{proof}
By the relation of the Clifford action and $ {\rm{Tr}}(AB)= {\rm{Tr}}(BA) $, in terms of the condition (4.17), we get
 \begin{align}
  &\int_{|\xi|=1}{\rm{Tr}}\Big(\sum_{i,j}g^{ij}c(u)c(v)c(w)c(Y)c(\partial_{i})\xi_{j}c(\xi)
+\sum_{i,j}g^{ij}c(u)c(v)c(w)c(\partial_{i})c(Y)\xi_{j}c(\xi) \Big)\sigma(\xi)\nonumber\\
=&  \int_{|\xi|=1}{\rm{Tr}}\Big(\sum_{i}c(u)c(v)c(w)c(Y)c(\partial_{i})\xi_{i}c(\xi)
+\sum_{i}c(u)c(v)c(w)c(\partial_{i})c(Y)\xi_{i}c(\xi) \Big)\sigma(\xi)\nonumber\\
=&  \int_{|\xi|=1}\sum_{i,l}\xi_{i}\xi_{l}{\rm{Tr}}\Big(c(u)c(v)c(w)c(Y)c(\partial_{i})c(e_{l})
+c(u)c(v)c(w)c(\partial_{i})c(Y)c(e_{l}) \Big)\sigma(\xi)\nonumber\\
=&-2\int_{|\xi|=1}\sum_{i,l}\xi_{i}\xi_{l}{\rm{Tr}}\Big(c(u)c(v)c(w)g(Y,\partial_{i})c(e_{l}) \Big)\sigma(\xi)\nonumber\\
=&-2\sum_{i}\frac{1}{2m}\delta _{i}^{l}{\rm{vol}}(S^{n-1}){\rm{Tr}}\Big(c(u)c(v)c(w)g(Y,\partial_{i})c(e_{l}) \Big) \nonumber\\
=&-\sum_{i}\frac{1}{m}{\rm{vol}}(S^{n-1}){\rm{Tr}}\Big(c(u)c(v)c(w)g(Y,\partial_{i})c(e_{i}) \Big) \nonumber\\
=&- \frac{1}{m}{\rm{vol}}(S^{n-1}){\rm{Tr}}\Big(c(u)c(v)c(w)c(Y) \Big) \nonumber\\
 =&- \frac{1}{m}{\rm{vol}}(S^{n-1})[g(v,w)g(u,Y)-g(u,w)g(v,Y)+g(u,v)g(w,Y)]{\rm{Tr}}(\rm{Id}).
\end{align}
Applying the trace property and (4.17) again, we get
 \begin{align}
 &\int_{|\xi|=1}{\rm{Tr}}\Big(\sum_{i,j}g^{ij}c(u)c(v)c(w)c(T)c(\partial_{i})\xi_{j}c(\xi) \Big)\sigma(\xi)\nonumber\\
 =&\int_{|\xi|=1}{\rm{Tr}}\Big(\sum_{i,j}g^{ij}c(u)c(v)c(w)c(T)c(\partial_{i})\xi_{j}\sum_{l}\xi_{l}c(e_{l}) \Big)\sigma(\xi)\nonumber\\
   =&\int_{|\xi|=1} {\rm{Tr}}\Big(\sum_{i,l}\xi_{i}\xi_{l}  c(u)c(v)c(w)c(T)c(\partial_{i})c(e_{l}) \Big)\sigma(\xi)\nonumber\\
  =&\sum_{i}\delta _{i}^{l}\frac{1}{n}{\rm{vol}}(S^{n-1})
  {\rm{Tr}}\Big(c(u)c(v)c(w)c(T)c(\partial_{i})c(e_{l}) \Big)\nonumber\\
   =&-{\rm{vol}}(S^{n-1}){\rm{Tr}}\Big( c(u)c(v)c(w)c(T) \Big)\nonumber\\
 =&-T(u,v,w){\rm{Tr}}(\rm{Id}){\rm{vol}}(S^{n-1}).
\end{align}
Using the relation of the Clifford action and (4.22) gives
\begin{align}
&\int_{|\xi|=1}{\rm{Tr}}\Big(\sum_{i,j}g^{ij}c(u)c(v)c(w)c(\partial_{i})c(T)\xi_{j}c(\xi) \Big)\sigma(\xi)\nonumber\\
 =&\int_{|\xi|=1}\sum_{i,m}\xi_{i}\xi_{m}\sum_{1 \leq  h,k,l\leq n}
 \sum_{\alpha,\beta,\gamma=1}^{n}T(e_{h},e_{k},e_{l})u_{\alpha}v_{\beta}w_{\gamma}\nonumber\\
&\times{\rm{Tr}}\big(c(e_{\alpha})c(e_{\beta})c(e_{\gamma})c(e_{i})c(e_{h})c(e_{k})c(e_{l})c(e_{m})\big)\sigma(\xi)\nonumber\\
 =& \sum_{i} \delta _{i}^{m} \frac{1}{n}{\rm{vol}}(S^{n-1})
 \sum_{\alpha,\beta,\gamma=1}^{n}T(e_{h},e_{k},e_{l})u_{\alpha}v_{\beta}w_{\gamma}\nonumber\\
&\times{\rm{Tr}}\Big(c(e_{\alpha})c(e_{\beta})\big(-c(e_{i})c(e_{\gamma})-2\delta_{\gamma}^{i}\big)c(e_{h})c(e_{k})c(e_{l})c(e_{m})\Big)
 \nonumber\\
=&\cdots   \nonumber\\
=&{\rm{vol}}(S^{n-1}) \sum_{1 \leq  h,k,l\leq n}\sum_{\alpha,\beta,\gamma=1}^{n}
T(e_{h},e_{k},e_{l})u_{\alpha}v_{\beta}w_{\gamma}\nonumber\\
&\times{\rm{Tr}}\big(c(e_{\alpha})c(e_{\beta})c(e_{\gamma})c(e_{h})c(e_{k})c(e_{l})\big) \nonumber\\
&-{\rm{vol}}(S^{n-1}) \sum_{1 \leq  h,k,l\leq n}\sum_{\alpha,\beta,\gamma=1}^{n}2T(e_{h},e_{k},e_{l})u_{\alpha}v_{\beta}w_{\gamma}
\delta_{\alpha}^{i}\nonumber\\
&\times{\rm{Tr}}\big( c(e_{\beta})c(e_{\gamma})c(e_{h})c(e_{k})c(e_{l})c(e_{i})\big) \nonumber\\
&+{\rm{vol}}(S^{n-1}) \sum_{1 \leq  h,k,l\leq n}\sum_{\alpha,\beta,\gamma=1}^{n}2T(e_{h},e_{k},e_{l})u_{\alpha}v_{\beta}w_{\gamma}
\delta_{\beta}^{i}\nonumber\\
&\times{\rm{Tr}}\big(c(e_{\alpha}) c(e_{\gamma})c(e_{h})c(e_{k})c(e_{l})c(e_{i})\big)\nonumber\\
&-{\rm{vol}}(S^{n-1}) \sum_{1 \leq  h,k,l\leq n}\sum_{\alpha,\beta,\gamma=1}^{n}2T(e_{h},e_{k},e_{l})u_{\alpha}v_{\beta}w_{\gamma}
\delta_{\gamma}^{i}\nonumber\\
&\times{\rm{Tr}}\big(c(e_{\alpha})c(e_{\beta}) c(e_{h})c(e_{k})c(e_{l})c(e_{i})\big) \nonumber\\
 =&-5T(u,v,w){\rm{Tr}}(\rm{Id}){\rm{vol}}(S^{n-1}).
\end{align}
Then the proof of the lemma is complete.
\end{proof}
From (3.10), Lemma 4.3 and Lemma 4.4, we obtain the main theorem in this section.
\begin{thm}
For Connes operator with the trilinear Clifford multiplication by functional of differential one-forms
$c(u),c(v),c(w)$, the spectral torsion for  type-I operator  $D_{F}+(c(T)+\sqrt{-1}c(Y)) \otimes \Phi $  equals to
\begin{align}
&\mathscr{T}(c(u),c(v),c(w))\nonumber\\
=&\int_M\int_{|\xi|=1}{\rm Tr}_{S(TM)\otimes F}
[\sigma_{-2m}\big(c(u)c(v)c(w)(D_{F}+(c(T)+\sqrt{-1}c(Y))\otimes \Phi )^{-2m+1}\big)]\sigma(\xi){\rm d}x\nonumber\\
=&\int_M -2^{m+1}T(u,v,w)
{\rm{ Tr}}^{F}( \Phi)\rm{vol}(S^{n-1}) {\rm d}\rm{vol}_{M}.
\end{align}
\end{thm}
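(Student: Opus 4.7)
The plan is to evaluate the right-hand side of formula (3.10) by plugging in the explicit $-2m$-order symbol (4.11) and then applying the trace identities in Lemma 4.3 and the integral identities in Lemma 4.4. The preparatory step is to fix a point $x_0 \in M$ and work in normal coordinates, invoking Lemma 3.2 and Lemma 3.3 to kill off the Christoffel and connection terms in $\sigma(\widetilde{D}_F^2)$ and $\sigma(\widetilde{D}_F)$; this is exactly what justifies the compact form (4.11). Since $\Phi$ acts only on $F$ while the Clifford operators act on $S(TM)$, the trace over $S(TM)\otimes F$ factors as $\mathrm{Tr}_{S(TM)}(\,\cdot\,)\cdot \mathrm{Tr}^{F}(\Phi)$, so all the work is reduced to computing traces of products of Clifford elements.

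Next, I would split (4.11) into two packages: (i) the $\xi$-independent ``leading'' piece $c(u)c(v)c(w)\bigl(c(T)+\sqrt{-1}\,c(Y)\bigr)\otimes\Phi$, and (ii) the ``correction'' piece, carrying the factor $m$, which is the anti-commutator $\{c(\partial_i),\,c(T)+\sqrt{-1}\,c(Y)\}$ paired with $\xi_j c(\xi)$. Integration of (i) over $|\xi|=1$ simply multiplies by $\mathrm{vol}(S^{n-1})$ and its trace is handled by the two formulas of Lemma 4.3, which produce a $T(u,v,w)$-contribution and a $g$-polynomial $Y$-contribution. For (ii), the three integral identities (4.18)--(4.20) in Lemma 4.4 cover all the Clifford-$\xi$ patterns that appear: (4.18) for the $c(Y)$ anti-commutator, and (4.19), (4.20) for the two $c(T)$ orderings.

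Combining (i) and (ii), I expect the $Y$-contributions to cancel in closed form: the factor $-\tfrac{1}{m}$ in (4.18) is deliberately matched against the factor $m$ in front of the correction term, and together with the appropriate $\sqrt{-1}$ this cancels the leading $c(Y)$-trace from (i). What then survives is purely the torsion contribution. Assembling the three remaining $T$-pieces --- the leading $T(u,v,w)\mathrm{Tr}(\mathrm{Id})\mathrm{vol}(S^{n-1})$ from (i) plus the weighted sum of (4.19) and (4.20) from (ii) --- collapses to the single coefficient $-2^{m+1}$ once we use $\mathrm{Tr}_{S(TM)}(\mathrm{Id}) = 2^m$. Integrating the resulting density against $\mathrm{dvol}_M$ yields the claim.

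The main technical obstacle is bookkeeping rather than conceptual: one has to keep precise track of the signs arising from the Clifford relations when inserting the anti-commutator, of the $\sqrt{-1}$ coming from the $c(Y)$ summand, and of the $m$-factor inherited from the hierarchical expansion (3.7). The cleanest way to organize this is to verify the $Y$-cancellation first in isolation (using only (4.18) against the $c(Y)$-half of Lemma 4.3), and only then collect the $T$-terms, whose combinatorics is completely algebraic and insensitive to $m$ except through the rank of the spinor bundle.
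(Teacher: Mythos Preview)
Your proposal is correct and follows essentially the same approach as the paper: the paper's proof is literally the single sentence ``From (3.10), Lemma 4.3 and Lemma 4.4, we obtain the main theorem,'' and you have spelled out precisely how those ingredients combine --- splitting (4.11) into the $\xi$-independent piece and the $m$-weighted anti-commutator piece, using Lemma 4.3 on the first and Lemma 4.4 on the second, observing the $Y$-cancellation via the $m\cdot(-\tfrac{1}{m})$ mechanism, and collecting the surviving $T$-terms against $\mathrm{Tr}_{S(TM)}(\mathrm{Id})=2^m$. A minor bookkeeping note: the normal-coordinate lemmas you cite as ``Lemma 3.2 and Lemma 3.3'' are labeled Lemma 4.1 and Lemma 4.2 in the paper.
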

\begin{rem}
In the preceding the calculation of the Wodzicki residue
research to recover the torsion of the linear
connection,  Dabrowski et al. \cite{DSZ} showed that the spectral definition of torsion can be readily extended to the noncommutative
case of spectral triples. In this section, the computation of  spectral torsion for  Connes operator  $D_{F}+(c(T)+\sqrt{-1}c(Y)) \otimes \Phi $
is the elaboration of spectral torsion in \cite{DSZ}.
\end{rem}
Substituting $T=0$ into above calculation we get
\begin{rem}
For the Connes type operator with the trilinear Clifford multiplication by functional of differential one-forms
$c(u),c(v),c(w)$, the spectral torsion for the Connes type operator  $D_{F}+\sqrt{-1}c(Y)\otimes \Phi $  equals to
\begin{align}
\mathscr{T}(c(u),c(v),c(w))
=\int_M\int_{|\xi|=1}{\rm Tr}_{S(TM)\otimes F}
[\sigma_{-2m}\big(c(u)c(v)c(w)(D_{F}+\sqrt{-1}c(Y)\otimes \Phi )^{-2m+1}\big)]\sigma(\xi){\rm d}x=0.
\end{align}
\end{rem}

\subsection{Spectral torsion for the  Connes type operator with twisted fluctuation as torsion}
In this section we show the type-II Connes operator with twisted fluctuation as torsion, namely that the minimal twist of
an Riemannian spin manifold induces a orthogonal and geodesic preserving torsion.
By using the technique developed in Section 3 to get the noncommutative residue density
of a suitable power of the Connes type operator, and give the representation
of spectral torsion for the Connes type operator with twisted fluctuation as torsion.

Let us begin by a technical lemma showing that the product of the grading $\gamma$ by any
Euclidean Dirac matrix  results.
The grading operator $\gamma$ denote by
\begin{align}
\gamma= (\sqrt{-1})^{m}\prod_{j=1}^{2m}c(e_{j}).
\end{align}
In the terms of the orthonormal frames $\{e_{i}\}(1\leq i,j\leq n)$ on $TM$, we have
 $
\gamma=(\sqrt{-1})^{m}c(e_{1})c(e_{2})\cdots c(e_{n}).
$
\begin{defn}\cite{Y}
 Suppose $V$ is a super vector space, and $\gamma$ is its super structure. If $\phi\in \rm{End}(V)$,
 let ${\rm{Tr}}(\phi)$ be the trace of $\phi$, then define
 \begin{align}
{\rm{ Str}}(\phi)={\rm{Tr}}(\gamma\circ\phi).
\end{align}
 Here ${\rm{Tr}}$ and $Str$ are called the trace and the super trace of $\phi$ respectively.
 \end{defn}
 \begin{lem}\cite{Y}
The super trace (function) $Str:\rm{End}_{\mathbb{C}}(S(2m))\rightarrow \mathbb{C}$
is a complex linear map satisfying
\begin{eqnarray}
{\rm{ Str}}(c(e_{i_{1}})c(e_{i_{w}})\cdots c(e_{i_{q}})) &=&\left\{
       \begin{array}{c}
        0,  ~~~~~~~~~~~~~~~~~~~~~~~{\rm if }~q<2m; \\[2pt]
      \frac{2^{m}}{(\sqrt{-1})^{m}}, ~~~~~~~~~~~~~~~{\rm if }~q=2m,
       \end{array}
    \right.
\end{eqnarray}
where $1\leq i_{1},i_{2},i_{q}\leq 2m $.
\end{lem}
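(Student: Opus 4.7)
The plan is to unpack the definition $\mathrm{Str}(\phi) = \mathrm{Tr}(\gamma\circ\phi)$, then exploit the identity $\gamma = (\sqrt{-1})^m c(e_1)\cdots c(e_{2m})$ together with the Clifford relations $c(e_i)c(e_j)+c(e_j)c(e_i) = -2\delta_{ij}$ to reduce each super trace to an ordinary trace of a Clifford monomial, and finally to invoke the standard tracelessness of any nonscalar Clifford monomial on the $2^m$-dimensional spinor space $S(2m)$. Complex linearity of $\mathrm{Str}$ is inherited from linearity of $\mathrm{Tr}$ and requires no argument.

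First I would note that $c(e_i)^2=-\mathrm{id}$ lets one collapse any repetition among the indices, so I may assume $i_1,\ldots,i_q$ are pairwise distinct; permuting them only affects the result by a sign, so I further order them $i_1<\cdots<i_q$. Substituting the expression for $\gamma$ gives
\begin{equation*}
\mathrm{Str}\bigl(c(e_{i_1})\cdots c(e_{i_q})\bigr)
= (\sqrt{-1})^m\,\mathrm{Tr}\bigl(c(e_1)\cdots c(e_{2m})\,c(e_{i_1})\cdots c(e_{i_q})\bigr).
\end{equation*}
Sliding each factor $c(e_{i_\ell})$ leftward through $c(e_1)\cdots c(e_{2m})$ via the Clifford relations pairs it against its match, yielding up to sign a scalar multiple of the complementary monomial $c(e_{k_1})\cdots c(e_{k_{2m-q}})$ with $\{k_1,\ldots,k_{2m-q}\}=\{1,\ldots,2m\}\setminus\{i_1,\ldots,i_q\}$.

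For $q<2m$ the reduced monomial has at least one surviving Clifford generator and is therefore not a scalar. Its ordinary trace vanishes by the conjugation trick: one picks an index $j$ so that $c(e_j)$ anticommutes with the whole monomial (take $j\notin\{k_1,\ldots,k_{2m-q}\}$ if $2m-q$ is odd, and $j\in\{k_1,\ldots,k_{2m-q}\}$ if $2m-q$ is even), and then cyclicity of $\mathrm{Tr}$ combined with $c(e_j)^{-1}c(e_{k_1})\cdots c(e_{k_{2m-q}})c(e_j) = -c(e_{k_1})\cdots c(e_{k_{2m-q}})$ forces the trace to equal its own negative. For $q=2m$ the ordered distinct indices must coincide with $1,\ldots,2m$, so $c(e_{i_1})\cdots c(e_{i_{2m}}) = (\sqrt{-1})^{-m}\gamma$, and
\begin{equation*}
\mathrm{Str}\bigl(c(e_{i_1})\cdots c(e_{i_{2m}})\bigr) = (\sqrt{-1})^{-m}\,\mathrm{Tr}(\gamma^2) = \frac{2^m}{(\sqrt{-1})^m},
\end{equation*}
using $\gamma^2=\mathrm{id}$ on $S(2m)$ and $\dim S(2m)=2^m$.

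The main obstacle I anticipate is the sign bookkeeping in the Clifford reduction and the parity case split in the conjugation step. A cleaner alternative that avoids this is to realize $S(2m)$ explicitly via the Fock representation built from the creation/annihilation operators $a_k^\pm=\tfrac12(c(e_{2k-1})\mp\sqrt{-1}\,c(e_{2k}))$, in which $\gamma$ is diagonal with eigenvalues $\pm 1$ on half-spinor subspaces of equal dimension $2^{m-1}$; both the tracelessness of nonscalar Clifford monomials and the identity $\mathrm{Tr}(\gamma^2)=2^m$ can then be read off directly from the matrix form, immediately yielding both branches of the claimed formula.
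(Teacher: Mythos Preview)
The paper does not give its own proof of this lemma: it is quoted verbatim from Yu's book \cite{Y} and used as a black box in the subsequent trace computations. There is therefore nothing in the paper to compare your argument against.

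That said, your argument is correct and is essentially the standard proof one finds in the literature. The reduction to distinct ordered indices, the identification of $\gamma\,c(e_{i_1})\cdots c(e_{i_q})$ with a sign times the complementary Clifford monomial, and the conjugation trick (choosing $j$ outside the complementary index set when $2m-q$ is odd and inside when $2m-q$ is even) are all valid. The $q=2m$ computation via $\gamma^2=\mathrm{id}$ and $\dim S(2m)=2^m$ is exactly right. Your alternative Fock-space realization is also a legitimate shortcut and makes the parity bookkeeping unnecessary, though it is not needed here. One minor remark: the lemma as stated tacitly assumes the indices $i_1,\ldots,i_q$ are pairwise distinct (otherwise the $q=2m$ value would not be well defined), so your preliminary reduction, while harmless, is not strictly required.
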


Now we explore an alternative way, consisting in generating torsion from a
torsionless connection, through a twisted fluctuation of the Connes operator. In addition,
we divide the proof of spectral torsion for the Connes type operator with twisted fluctuation into {\bf Case (1)},{\bf Case (2)} and {\bf Case (3)}.

{\bf Case (1)} The Spectral torsion for $D_{F}+\gamma \otimes \Phi$

First, we compute the symbol of $\sigma_{-n}\big(c(u)c(v)c(w)(D_{F}+\gamma \otimes \Phi )^{-n+1}\big)$.
Substituting $D_{F}+\gamma \otimes \Phi$ into (3.4) yields
 \begin{align}
\Big(D_{F}+\gamma\otimes \Phi \Big)^{2}
=&D_{F}^{2}
+\sum_{j=1}^{n}c(e_{j})\Big(\nabla_{e_{j}}^{ S(TM)}\otimes \rm{id}_{ F}+ \rm{id}_{ S(TM)}\otimes
\nabla^{F}_{e_{j}}\Big)(\gamma\otimes \Phi)\nonumber\\
&+(\gamma\otimes \Phi)\sum_{j=1}^{n}c(e_{j})\Big(\nabla_{e_{j}}^{ S(TM)}\otimes \rm{id}_{ F}+ \rm{id}_{ S(TM)}\otimes \nabla^{F}_{e_{j}}\Big)
+\Phi^{2}\nonumber\\
=&D_{F}^{2}
+\Big(\sum_{j=1}^{n}c(e_{j})\otimes \nabla_{e_{j}}^{F}\Big)(\gamma\otimes \Phi)
+(\gamma\otimes \Phi)\Big(\sum_{j=1}^{n}c(e_{j})\otimes \nabla_{e_{j}}^{F}\Big)
+\Phi^{2}\nonumber\\
=&D_{F}^{2}
+\sum_{j=1}^{n}c(e_{j})\gamma\otimes (\nabla^{F}_{e_{j}}\Phi)
+\sum_{j=1}^{n}\gamma c(e_{j})\otimes \Phi\nabla^{F}_{e_{j}}
+\Phi^{2}\nonumber\\
=&D_{F}^{2}
-\gamma \sum_{j=1}^{n} c(e_{j})\otimes\nabla^{F}_{e_{j}}(\Phi)
+\Phi^{2}.
\end{align}
Then we obtain
$\sigma_{-3}\big((D_{F}+\gamma\otimes \Phi)^{-2}\big)=\sigma_{-3}\big(D_{F} ^{-2}\big)$.
From (4.11) and the main symbol of $D_{F}+\gamma \otimes \Phi$, we get
\begin{align}
 \sigma_{-2m}\big(c(u)c(v)c(w)(D_{F}+\gamma\otimes \Phi)^{1-2m}\big)(x_{0})|_{|\xi|=1}
=-c(u)c(v)c(w)\big(\gamma\otimes \Phi\big).
\end{align}
 We note that ${\rm{Tr}}\big(c(u)c(v)c(w)\gamma\big)={\rm{Str}}\big(c(u)c(v)c(w)\big) =0$, then
\begin{thm}
For the Connes type operator with the trilinear Clifford multiplication by functional of differential one-forms
$c(u),c(v),c(w)$, the spectral torsion for  the Connes type operator  $D_{F}+\gamma\otimes \Phi $  equal to
\begin{eqnarray}
\mathscr{T}(c(u),c(v),c(w)) =0.
\end{eqnarray}
\end{thm}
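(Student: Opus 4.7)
The plan is to feed the symbol identity (4.31) into the general torsion-functional expression (3.10) and then exploit the supertrace vanishing property in Lemma 4.6 to kill the integrand pointwise.

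First I would invoke (4.31): at $x_0$ with $|\xi|=1$, the leading symbol collapses to the $\xi$-independent endomorphism $-c(u)c(v)c(w)(\gamma\otimes\Phi)$. The absence of any $\xi$-dependence is structurally important and should be understood as a consequence of $\gamma$ anticommuting with every $c(e_j)$ in even dimension. This anticommutation forces the cross terms $c(\partial_i)(\gamma\otimes\Phi) + (\gamma\otimes\Phi)c(\partial_i)$ in (3.4) to cancel identically, which both yields the identity $\sigma_{-3}((D_F+\gamma\otimes\Phi)^{-2}) = \sigma_{-3}(D_F^{-2})$ used just above (4.31) and, combined with the normal coordinate conditions $\sigma^{S(TM)}_j(x_0)=0$, $\sigma^F_j(x_0)=0$, $\Gamma^k(x_0)=0$, $\partial_{x_j}g^{\alpha\beta}(x_0)=0$ of Lemma 3.2 and Lemma 4.2, eliminates the derivative and cross-product contributions in the general expression (3.9).

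Next I would factor the trace on the tensor product as
\[
{\rm Tr}_{S(TM)\otimes F}\bigl(c(u)c(v)c(w)(\gamma\otimes\Phi)\bigr) = {\rm Tr}_{S(TM)}\bigl(c(u)c(v)c(w)\gamma\bigr)\cdot{\rm Tr}_F(\Phi).
\]
By Definition 4.5 the spinor factor is precisely the supertrace ${\rm Str}(c(u)c(v)c(w))$. Expanding $c(u)c(v)c(w) = \sum_{i,j,k} u_i v_j w_k\, c(e_i)c(e_j)c(e_k)$ in the orthonormal frame, each monomial has Clifford length at most $3$, which is strictly smaller than $2m$ whenever $n=2m\ge 4$. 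Lemma 4.6 then forces the supertrace of every such monomial to vanish, so the integrand in (3.10) is identically zero and $\mathscr{T}(c(u),c(v),c(w))=0$ as claimed.

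The main obstacle is really a verification rather than a genuine obstruction: one must carefully confirm that no $\xi$-dependent residue from (3.9) survives when $\Box=\gamma$, so that the only endomorphism one ends up tracing against $\gamma$ is of Clifford degree $3$. Everything hinges on the single identity $\gamma c(e_j) = -c(e_j)\gamma$ in even dimension; once this is used to collapse (4.30) and to trivialize $\sigma_{-3}$ modulo $\sigma_{-3}(D_F^{-2})$, the remainder of the proof is a direct application of Lemma 4.6, and no delicate estimates or integral identities of the Case aII type used in Section 4.1 are needed.
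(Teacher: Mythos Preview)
Your proposal is correct and follows essentially the same route as the paper: reduce the symbol via (4.30)--(4.31) using $\gamma c(e_j)=-c(e_j)\gamma$ to kill the $\xi$-dependent cross terms, then observe that ${\rm Tr}_{S(TM)}(c(u)c(v)c(w)\gamma)={\rm Str}(c(u)c(v)c(w))=0$ by the supertrace lemma. Your numbering is off (what you call Definition~4.5 and Lemma~4.6 are Definition~4.8 and Lemma~4.9 in the paper, and the normal-coordinate lemmas are Lemmas~4.1--4.2), and your restriction to $n=2m\ge 4$ is unnecessary since in dimension $2$ the product $c(u)c(v)c(w)$ is odd and hence of Clifford degree $1<2$, but these do not affect the argument.
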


{\bf Case (2)} The Spectral torsion for $D_{F}+ c(X)\gamma \otimes \Phi $

Now let¡¯s recall the $-2m$ order main symbol of Connes operator, which is
important to the Spectral torsion for $D_{F}+ c(X)\gamma \otimes \Phi $.
 Let $X=\sum_{i=1}^{n}X_{i}e_{i}$, by (3.9), we get
\begin{align}
 &\sigma_{-2m}(c(u)c(v)c(w)\big(D_{F}+ c(X)\gamma \otimes \Phi\big)^{1-2m})(x_{0})|_{|\xi|=1} \nonumber\\
=&c(u)c(v)c(w) c(X)\gamma \otimes \Phi
+mg^{ij}c(u)c(v)c(w)\Big[c(\partial_{i})\big( c(X)\gamma \otimes \Phi\big) \nonumber\\
&+\big( c(X)\gamma\otimes \Phi \big)c(\partial_{i})   \Big]\xi_{j}c(\xi).
\end{align}
Next we compute  the spectral torsion for a twisted fluctuation of the Connes type operator $D_{F}+ c(X)\gamma \otimes \Phi$.
When $n=2m=4$, by lemma 4.9 we get
\begin{align}
{\rm{Tr}}\big(c(u)c(v)c(w)c(X)\gamma )\big)
 =&\sum_{j,k,l,i=1}^{n}  u_{j}v_{k}w_{l}X_{i}{\rm{Tr}}\big(c(e_{j})c(e_{k}) c(e_{l}) c(e_{i})
       c(X)\gamma\big)\nonumber\\
 =&\sum_{i,j,k,l=1}^{n}  u_{j}v_{k}w_{l}X_{i}{\rm{ Str}}\big(c(e_{j})c(e_{k}) c(e_{l}) c(e_{i})\big)\nonumber\\
  =&\sum_{\{i,j,k,l \}=\{1,2,3,4\}}   \frac{2^{2}}{(\sqrt{-1})^{2}}u_{j}v_{k}w_{l}X_{i}\nonumber\\
   =&-4\langle  u^{*}\wedge v^{*}\wedge w^{*}\wedge X^{*},e_{1}^{*}\wedge e_{2}^{*}\wedge e_{3}^{*}\wedge e_{4}^{*}  \rangle.
\end{align}
 Then we obtain
\begin{eqnarray}
{\rm{Tr}}\big(c(u)c(v)c(w)c(X)\gamma \big) &=&\left\{
       \begin{array}{c}
        0,  ~~~~~~~~~~~~~~~~~~~~~~~~~~~~~~~~~~~~~~~~~~~~~~~~~~ ~~{\rm if }~2m\neq4; \\[2pt]
      -4\langle u^{*}\wedge v^{*}\wedge w^{*}\wedge X^{*},e_{1}^{*}\wedge e_{2}^{*}\wedge e_{3}^{*}\wedge e_{4}^{*}  \rangle, ~~{\rm if }~2m=4.
       \end{array}
    \right.
\end{eqnarray}
By the relation of the Clifford action and the computation of  super trace, we get
\begin{align}
&\int_{|\xi|=1}\sum_{i,j}g^{ij}{\rm{Tr}}\big(c(u)c(v)c(w)c(X)\gamma c(\partial_{i})\xi_{j}c(\xi) \big)\sigma(\xi)\nonumber\\
=&\int_{|\xi|=1}\sum_{i}\xi_{i}\xi_{l}{\rm{Tr}}\big(c(u)c(v)c(w)c(X)\gamma c(\partial_{i})c(e_{l}) \big)\sigma(\xi)\nonumber\\
=&\sum_{i}\delta _{i}^{l}\frac{1}{n}{\rm{vol}}(S^{n-1})   {\rm{Tr}}\big(c(u)c(v)c(w)c(X)\gamma c(\partial_{i})c(e_{l}) \big)\nonumber\\
=&- {\rm{vol}}(S^{n-1}) {\rm{Tr}}\big(c(u)c(v)c(w)c(X)\gamma\big).
\end{align}
Also, straightforward computations yield
\begin{align}
&\int_{|\xi|=1}\sum_{i,j}g^{ij}{\rm{Tr}}\big(c(u)c(v)c(w)c(\partial_{i})c(X)\gamma\xi_{j}c(\xi) \big)\sigma(\xi)\nonumber\\
=&\int_{|\xi|=1}\sum_{i,l} \xi_{i}\xi_{l}{\rm{Tr}}\big(c(u)c(v)c(w)c(\partial_{i})c(X)\gamma c(\partial_{l}) \big)\sigma(\xi)\nonumber\\
=&-\sum_{i,l }\delta _{i}^{l}  \frac{1}{n} {\rm{vol}}(S^{n-1}) {\rm{Tr}}\big(c(u)c(v)c(w)c(\partial_{i})c(X) c(\partial_{l})\gamma \big) \nonumber\\
=&-\sum_{i }  \frac{1}{n} {\rm{vol}}(S^{n-1})  {\rm{Tr}}\Big(c(u)c(v)c(w)
\big(-c(X)c(\partial_{i})-2g(\partial_{i},X )\big) c(\partial_{i}) \gamma\Big) \nonumber\\
=&- \sum_{i }  \frac{1}{n} {\rm{vol}}(S^{n-1})
  {\rm{Tr}}\big(c(u)c(v)c(w)c(X) \gamma\big) \nonumber\\
&+ \sum_{i } \frac{1}{n}{\rm{vol}}(S^{n-1})  {\rm{Tr}}\Big(c(u)c(v)c(w)
\big(2g(\partial_{i},X_{i}) \big)c(e_{i})\gamma\Big) \nonumber\\
=&-{\rm{vol}}(S^{n-1})  {\rm{Tr}}\Big(c(u)c(v)c(w)
c(X)\gamma\Big)
+\frac{2}{n}{\rm{vol}}(S^{n-1})  {\rm{Tr}}\Big(c(u)c(v)c(w)c(X)\gamma\Big)\nonumber\\
=&\frac{2-n}{n}{\rm{vol}}(S^{n-1})  {\rm{Tr}}\big(c(u)c(v)c(w)c(X)\gamma\big).
\end{align}
Substituting into above calculation, we obtain
 \begin{thm}
For the Connes type  operator with the trilinear Clifford multiplication by functional of differential one-forms
$c(u),c(v),c(w)$, the spectral torsion for the Connes type operator  $D_{F}+ c(X)\gamma\otimes \Phi $  equal to
\begin{eqnarray}
\mathscr{T}(c(u),c(v),c(w)) &=&\left\{
       \begin{array}{c}
        0,  ~~~~~~~~~~~~~~~~~~~~~~~~~~~~~~~~~~~~~~~~~~~~~~~~~~~~~~~~~~~~~~~{\rm if }~ 2m > 4; \\[2pt]
      \int_M  2^{3}(u^{*}\wedge v^{*}\wedge w^{*}\wedge X^{*}){\rm{vol}}(S^{3}), ~~~~~~~~~~~~~~~~~~~~~~~~{\rm if }~2m=4.
       \end{array}
    \right.
\end{eqnarray}
\end{thm}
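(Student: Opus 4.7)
The plan is to start from the noncommutative residue formula (3.10) together with the explicit leading symbol (4.33) and reduce the fibrewise trace to the super trace computed in Lemma 4.9. Concretely, I would write
\begin{align*}
\mathscr{T}(c(u),c(v),c(w)) = \int_M \int_{|\xi|=1} {\rm Tr}_{S(TM)\otimes F}[\sigma_{-2m}(c(u)c(v)c(w)\widetilde{D}_F^{1-2m})]\,\sigma(\xi)\,{\rm d}x,
\end{align*}
and substitute the two pieces of (4.33): the order-zero piece
\[
A_1 = c(u)c(v)c(w) c(X)\gamma \otimes \Phi,
\]
and the first-order correction
\[
A_2 = m g^{ij} c(u)c(v)c(w)\bigl[c(\partial_i)(c(X)\gamma\otimes \Phi) + (c(X)\gamma\otimes \Phi)c(\partial_i)\bigr]\xi_j c(\xi).
\]

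Next I would factor the trace as ${\rm Tr}_{S(TM)\otimes F}(B\otimes \Phi) = {\rm Tr}_{S(TM)}(B)\,{\rm Tr}_F(\Phi)$ and, using that $\gamma$ anticommutes with each $c(e_i)$, rewrite every contribution in the form ${\rm Tr}(c(e_{i_1})\cdots c(e_{i_q})\gamma) = {\rm Str}(c(e_{i_1})\cdots c(e_{i_q}))$. By Lemma 4.9, this is nonzero only when $q=2m$. The piece $A_1$ produces ${\rm Str}(c(u)c(v)c(w)c(X))$, which carries exactly four Clifford factors; hence it vanishes unless $2m=4$. The piece $A_2$ superficially involves six Clifford factors of the form $c(u)c(v)c(w)c(\partial_i)c(X)c(e_l)$ or $c(u)c(v)c(w)c(X)c(\partial_i)c(e_l)$, but the $\xi$-integration (4.17) collapses it via $\xi_i\xi_l \mapsto \tfrac{1}{n}\delta_{il}$, and the Clifford identity $c(X)c(\partial_i)+c(\partial_i)c(X) = -2g(X,\partial_i)$ together with $\sum_i g(\partial_i,X)c(\partial_i) = c(X)$ reduces $A_2$ again to a multiple of ${\rm Str}(c(u)c(v)c(w)c(X))$. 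This is exactly the content of the two auxiliary identities (4.36) and (4.37), so for $2m>4$ both contributions vanish identically.

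For the critical dimension $2m=4$ I would then collect coefficients. The $A_1$ contribution gives $+1\cdot{\rm vol}(S^{n-1})\cdot{\rm Tr}(c(u)c(v)c(w)c(X)\gamma)\cdot{\rm Tr}_F(\Phi)$. From (4.36)--(4.37) the $A_2$ contribution gives $m\bigl(-1+\tfrac{2-n}{n}\bigr){\rm vol}(S^{n-1})\cdot{\rm Tr}(c(u)c(v)c(w)c(X)\gamma)\cdot{\rm Tr}_F(\Phi)$, which at $n=4$, $m=2$ equals $-3$ times the same factor. Adding the two coefficients yields $1-3=-2$; inserting the explicit super-trace evaluation (4.35),
\[
{\rm Tr}(c(u)c(v)c(w)c(X)\gamma) = -4\langle u^*\wedge v^*\wedge w^*\wedge X^*,\,e_1^*\wedge e_2^*\wedge e_3^*\wedge e_4^*\rangle,
\]
gives the factor $(-2)(-4)=8 = 2^3$ advertised in the statement.

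The hard part will be the Clifford-algebraic bookkeeping in the $A_2$ term: one must correctly commute $c(\xi)$ through $c(X)\gamma$, exploit the $\xi$-integration to enforce $i=l$, and then use the Clifford relations to rewrite $\sum_i c(\partial_i)c(X)c(\partial_i) = (n-2)c(X)$ before the super-trace reduction can be applied. A secondary check is to ensure, via Lemmas 3.3 and 3.4 and the normal-coordinate hypothesis, that all potentially contributing terms from the derivative pieces of $\sigma_{-2m}$ in (3.9) really do vanish at $x_0$, so that (4.33) is indeed the complete on-diagonal main symbol at $|\xi|=1$.
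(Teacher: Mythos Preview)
Your proposal is correct and follows the paper's argument essentially verbatim: you invoke the same leading-symbol formula (4.33), the same super-trace evaluation (4.35) via Lemma 4.9, and the same two auxiliary integrals (4.36)--(4.37), and your coefficient bookkeeping $(1-3)\cdot(-4)=2^3$ at $n=4$ is exactly how the paper combines them. One cosmetic remark: your final expression correctly carries the factor ${\rm Tr}_F(\Phi)$, which the paper's stated Theorem~4.11 appears to have dropped (compare Theorems~4.5 and~4.12); also, the normal-coordinate lemmas you cite as ``3.3 and 3.4'' are the paper's Lemmas~4.1 and~4.2.
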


{\bf Case (3)} The Spectral torsion for $D_{F}+ \sqrt{-1}c(T)\gamma \otimes \Phi $

Just like in Subsection 4.1, we now consider the the spectral torsion for $D_{F}+ \sqrt{-1}c(T)\gamma \otimes \Phi $
which is an analogy of the representation of such operator and the noncommutative residue density.
 The difference is that here we use the super trace of $D_{F}+ \sqrt{-1}c(T)\gamma \otimes \Phi $, instead of the trace of $D_{F}+(c(T)+\sqrt{-1}c(Y))\otimes \Phi$.
By (3.9), we get
\begin{align}
 &\sigma_{-2m}(c(u)c(v)c(w)\big(D_{F}+ \sqrt{-1}c(T)\gamma \otimes \Phi\big)^{1-2m})(x_{0})|_{|\xi|=1} \nonumber\\
=&c(u)c(v)c(w) \sqrt{-1}c(T)\gamma \otimes \Phi
+mg^{ij}c(u)c(v)c(w)\Big[c(\partial_{i})\big( \sqrt{-1}c(T)\gamma \otimes \Phi\big) \nonumber\\
&+\big( \sqrt{-1}c(T)\gamma\otimes \Phi \big)c(\partial_{i})   \Big]\xi_{j}c(\xi).
\end{align}
Next we compute  the super trace of such operators $D_{F}+ \sqrt{-1}c(T)\gamma \otimes \Phi$ and the noncommutative residue density.

(1). When $n=2m=6$, by the relation of the Clifford action and lemma 4.9, we obtain
\begin{align}
{\rm{Tr}}\big(c(u)c(v)c(w)c(T)\gamma )\big)
=&\sum_{1 \leq  j,k,l\leq n} T(e_{j},e_{k},e_{l})
{\rm{Tr}}\big(c(u)c(v)c(w)c(e_{j})c(e_{k})c(e_{l})\gamma\big)\nonumber\\
=&\sum_{1 \leq  j,k,l\leq n} T(e_{j},e_{k},e_{l})
{\rm{ Str}}\big(c(u)c(v)c(w)c(e_{j})c(e_{k})c(e_{l})\big)\nonumber\\
=&\frac{2^{3}}{(\sqrt{-1})^{3}}\langle  u^{*}\wedge v^{*}\wedge w^{*}\wedge T,
e_{1}^{*}\wedge e_{2}^{*}\wedge e_{3}^{*}\wedge e_{4}^{*}\wedge e_{5}^{*}\wedge e_{6}^{*}  \rangle.
\end{align}
By (4.39) and the computation of  super trace, we get
\begin{align}
{\rm{Tr}}\big(c(u)c(v)c(w)c(T)\gamma \big) =&\left\{
       \begin{array}{c}
        0,  ~~~~~~~~~~~~~~~~~~~~~~~~~~~~~~~~~~~~~~~~~~~~~~~~~~~~~~~~~~~~~~~~~~ ~~{\rm if }~2m>6; \\[2pt]
     \frac{2^{3}}{(\sqrt{-1})^{3}}\langle  u^{*}\wedge v^{*}\wedge w^{*}\wedge T,
e_{1}^{*}\wedge e_{2}^{*}\wedge e_{3}^{*}\wedge e_{4}^{*}\wedge e_{5}^{*}\wedge e_{6}^{*}  \rangle, ~~{\rm if }~2m=6.
       \end{array}
    \right.
\end{align}
In terms of the condition (4.17), we get
\begin{align}
&\int_{|\xi|=1}\sum_{i,j}g^{ij}{\rm{Tr}}\big(c(u)c(v)c(w)c(\partial_{i})c(T)\gamma\xi_{j}c(\xi) \big)\sigma(\xi)\nonumber\\
=&\int_{|\xi|=1}\sum_{i,l} \xi_{i}\xi_{l}{\rm{Tr}}\big(c(u)c(v)c(w)c(\partial_{i})c(T)\gamma c(\partial_{l}) \big)\sigma(\xi)\nonumber\\
=&\sum_{i,l }\delta _{i}^{l}  \frac{1}{n} {\rm{vol}}(S^{n-1}){\rm{Tr}}\big(c(u)c(v)c(w)c(\partial_{i})c(T)\gamma c(\partial_{l}) \big) \nonumber\\
=&\sum_{i }  \frac{1}{n} {\rm{vol}}(S^{n-1})  {\rm{Tr}}\Big(c(u)c(v)
\big(-c(\partial_{i})c(w)-2g(w, \partial_{i})\big) c(T)\gamma c(\partial_{i})\Big) \nonumber\\
=&- \sum_{i }  \frac{1}{n} {\rm{vol}}(S^{n-1})
  {\rm{Tr}}\big(c(u)c(v)c(\partial_{i})c(w)c(T)\gamma c(\partial_{i})  \big) \nonumber\\
  &- \frac{2}{n}{\rm{vol}}(S^{n-1})  {\rm{Tr}}\Big(c(u)c(v)c(T)\gamma c(w)\Big) \nonumber\\
=& \sum_{i }  \frac{1}{n} {\rm{vol}}(S^{n-1})
  {\rm{Tr}}\big(c(u)c(\partial_{i})c(v)c(w)c(T)\gamma c(\partial_{i})  \big)
   + \frac{2}{n}{\rm{vol}}(S^{n-1})  {\rm{Tr}}\Big(c(u)c(w)c(T)\gamma c(v)\Big) \nonumber\\
  &- \frac{2}{n}{\rm{vol}}(S^{n-1})  {\rm{Tr}}\Big(c(u)c(v)c(T)\gamma c(w)\Big) \nonumber\\
=&   {\rm{vol}}(S^{n-1})
  {\rm{Tr}}\big( c(u)c(v)c(w)c(T)\gamma    \big)
   - \frac{2}{n}{\rm{vol}}(S^{n-1})  {\rm{Tr}}\Big(c(v)c(w)c(T)\gamma c(u)\Big) \nonumber\\
  &+ \frac{2}{n}{\rm{vol}}(S^{n-1})  {\rm{Tr}}\Big(c(u)c(w)c(T)\gamma c(v)\Big)
 - \frac{2}{n}{\rm{vol}}(S^{n-1})  {\rm{Tr}}\Big(c(u)c(v)c(T)\gamma c(w)\Big) \nonumber\\
 =&\frac{n-6}{n}{\rm{vol}}(S^{n-1})
  {\rm{Tr}}\big( c(u)c(v)c(w)c(T)\gamma    \big).
\end{align}
Also, straightforward computations yield
\begin{align}
&\int_{|\xi|=1}\sum_{i,j}g^{ij}{\rm{Tr}}\big(c(u)c(v)c(w)c(T)\gamma c(\partial_{i})\xi_{j}c(\xi) \big)\sigma(\xi)\nonumber\\
=&\int_{|\xi|=1}\sum_{i,l} \xi_{i}\xi_{l}{\rm{Tr}}\big(c(u)c(v)c(w)c(T)\gamma c(\partial_{i})c(\partial_{l}) \big)\sigma(\xi)\nonumber\\
=&\sum_{i,l }\delta _{i}^{l}  \frac{1}{n} {\rm{vol}}(S^{n-1}){\rm{Tr}}\big(c(u)c(v)c(w)c(T)\gamma c(\partial_{i})c(\partial_{l}) \big) \nonumber\\
 =&-{\rm{vol}}(S^{n-1})  {\rm{Tr}}\big( c(u)c(v)c(w)c(T)\gamma \big) .
\end{align}
Then $n=2m=6$, substituting above calculation into (3.10)  leads to the Spectral torsion for $D_{F}+ \sqrt{-1}c(T)\gamma \otimes \Phi $
\begin{align}
\mathscr{T}(c(u),c(v),c(w)) =  \int_M  2^{4}(u^{*}\wedge v^{*}\wedge w^{*}\wedge T){\rm{Tr}}^{F}(\Phi){\rm{vol}}(S^{5}).
\end{align}

(2). When $n=2m=4$, by lemma 4.9, we have the equality:
\begin{align}
&{\rm{Tr}}\big(c(u)c(v)c(w)c(T)\gamma )\big)\nonumber\\
=&\sum_{1 \leq  j,k,l\leq 4} T(e_{j},e_{k},e_{l})
{\rm{Tr}}\big(c(u)c(v)c(w)c(e_{j})c(e_{k})c(e_{l})\gamma\big)\nonumber\\
 =&\sum_{1 \leq  j,k,l\leq 4} T(e_{j},e_{k},e_{l})
{\rm{Tr}}\big(\gamma c(u)c(v)c(w)c(e_{j})c(e_{k})c(e_{l})\big)\nonumber\\
 =& {\rm{Tr}}\Big[\gamma \sum_{\alpha,\beta,\gamma=1}^{4}u_{\alpha}v_{\beta}w_{\gamma}  c(e_{\alpha})c(e_{\beta})c(e_{\gamma})
  \Big( T(e_{1},e_{2},e_{3}) c(e_{1})c(e_{2})c(e_{3})+T(e_{1},e_{2},e_{4}) c(e_{1})c(e_{2})c(e_{4})\nonumber\\
  &+T(e_{1},e_{3},e_{4}) c(e_{1})c(e_{3})c(e_{4})+T(e_{2},e_{3},e_{4}) c(e_{2})c(e_{3})c(e_{4})
    \Big) \Big]\nonumber\\
   =& {\rm{Str}}\Big[\sum_{\alpha,\beta,\gamma=1}^{4}u_{\alpha}v_{\beta}w_{\gamma}  c(e_{\alpha})c(e_{\beta})c(e_{\gamma})
  \Big( T(e_{1},e_{2},e_{3}) c(e_{1})c(e_{2})c(e_{3})+T(e_{1},e_{2},e_{4}) c(e_{1})c(e_{2})c(e_{4})\nonumber\\
  &+T(e_{1},e_{3},e_{4}) c(e_{1})c(e_{3})c(e_{4})+T(e_{2},e_{3},e_{4}) c(e_{2})c(e_{3})c(e_{4})
    \Big) \Big]\nonumber\\
 =& H_{1}|_{\alpha=\beta}+H_{2}|_{\alpha=\gamma}+H_{3 }|_{\beta=\gamma}-2 H_{4 }|_{\alpha=\beta=\gamma}.
  \end{align}
By the relation of the Clifford action and $ {\rm{Tr}}(AB)= {\rm{Tr}}(BA) $, we obtain
 \begin{align}
    & H_{1}|_{\alpha=\beta}\nonumber\\
    =& {\rm{Str}}\Big[\sum_{\alpha,\gamma=1}^{4}u_{\alpha}v_{\alpha}w_{\gamma}  c(e_{\alpha})c(e_{\alpha})c(e_{\gamma})
  \Big( T(e_{1},e_{2},e_{3}) c(e_{1})c(e_{2})c(e_{3})+T(e_{1},e_{2},e_{4}) c(e_{1})c(e_{2})c(e_{4})\nonumber\\
  &+T(e_{1},e_{3},e_{4}) c(e_{1})c(e_{3})c(e_{4})+T(e_{2},e_{3},e_{4}) c(e_{2})c(e_{3})c(e_{4})
    \Big) \Big]\nonumber\\
     =& {\rm{Str}}\Big[-g(u,v)
  \Big(\big( -T(e_{1},e_{2},e_{3}) w_{4} +T(e_{1},e_{2},e_{4}) w_{3}
  -T(e_{1},e_{3},e_{4}) w_{2}+T(e_{2},e_{3},e_{4})w_{1}  \big)\nonumber\\
  &\times c(e_{1}) c(e_{2})c(e_{3})c(e_{4})\Big) \Big].
  \end{align}
In the similar way we get
 \begin{align}
   &H_{2}|_{\alpha=\gamma}\nonumber\\
    =& {\rm{Str}}\Big[\sum_{\alpha,\beta=1}^{4}u_{\alpha}v_{\beta}w_{\alpha}  c(e_{\alpha})c(e_{\beta})c(e_{\alpha})
  \Big( T(e_{1},e_{2},e_{3}) c(e_{1})c(e_{2})c(e_{3})+T(e_{1},e_{2},e_{4}) c(e_{1})c(e_{2})c(e_{4})\nonumber\\
  &+T(e_{1},e_{3},e_{4}) c(e_{1})c(e_{3})c(e_{4})+T(e_{2},e_{3},e_{4}) c(e_{2})c(e_{3})c(e_{4})
    \Big) \Big]\nonumber\\
  =& {\rm{Str}}\Big[\sum_{\alpha,\beta=1}^{4}u_{\alpha}v_{\beta}w_{\alpha}
        \big(-c(e_{\beta})c(e_{\alpha}) -2\delta_{\alpha}^{\beta}\big)c(e_{\alpha})
  \Big( T(e_{1},e_{2},e_{3}) c(e_{1})c(e_{2})c(e_{3})\nonumber\\
  &+T(e_{1},e_{2},e_{4}) c(e_{1})c(e_{2})c(e_{4})+T(e_{1},e_{3},e_{4}) c(e_{1})c(e_{3})c(e_{4})+T(e_{2},e_{3},e_{4}) c(e_{2})c(e_{3})c(e_{4})
    \Big) \Big]\nonumber\\
     =& {\rm{Str}}\Big[g(u,w)
  \Big(\big( -T(e_{1},e_{2},e_{3}) v_{4} +T(e_{1},e_{2},e_{4}) v_{3}
  -T(e_{1},e_{3},e_{4}) v_{2}+T(e_{2},e_{3},e_{4})v_{1}  \big)\nonumber\\
  &\times c(e_{1}) c(e_{2})c(e_{3})c(e_{4})\Big) \Big]\nonumber\\
  & -2{\rm{Str}}\Big[\sum_{\alpha =1}^{4}u_{\alpha}v_{\alpha}w_{\alpha}
         c(e_{\alpha})  \Big( T(e_{1},e_{2},e_{3}) c(e_{1})c(e_{2})c(e_{3})+T(e_{1},e_{2},e_{4}) c(e_{1})c(e_{2})c(e_{4})\nonumber\\
  &+T(e_{1},e_{3},e_{4}) c(e_{1})c(e_{3})c(e_{4})+T(e_{2},e_{3},e_{4}) c(e_{2})c(e_{3})c(e_{4})
    \Big) \Big].
       \end{align}
Also, straightforward computations yield
 \begin{align}
&H_{3 }|_{\beta=\gamma}\nonumber\\
    =& {\rm{Str}}\Big[\sum_{\alpha,\beta=1}^{4}u_{\alpha}v_{\beta}w_{\beta}  c(e_{\alpha})c(e_{\beta})c(e_{\beta})
  \Big( T(e_{1},e_{2},e_{3}) c(e_{1})c(e_{2})c(e_{3})+T(e_{1},e_{2},e_{4}) c(e_{1})c(e_{2})c(e_{4})\nonumber\\
  &+T(e_{1},e_{3},e_{4}) c(e_{1})c(e_{3})c(e_{4})+T(e_{2},e_{3},e_{4}) c(e_{2})c(e_{3})c(e_{4})
    \Big) \Big]\nonumber\\
     =& {\rm{Str}}\Big[-g(v,w)
  \Big(\big( -T(e_{1},e_{2},e_{3}) u_{4} +T(e_{1},e_{2},e_{4}) u_{3}
  -T(e_{1},e_{3},e_{4}) u_{2}+T(e_{2},e_{3},e_{4})u_{1}  \big)\nonumber\\
  &\times c(e_{1}) c(e_{2})c(e_{3})c(e_{4})\Big) \Big],
\end{align}
and
 \begin{align}
&H_{4}|_{\alpha=\beta=\gamma}\nonumber\\
    =& {\rm{Str}}\Big[\sum_{\alpha=1}^{4}u_{\alpha}v_{\alpha}w_{\alpha}  c(e_{\alpha})c(e_{\alpha})c(e_{\alpha})
  \Big( T(e_{1},e_{2},e_{3}) c(e_{1})c(e_{2})c(e_{3})+T(e_{1},e_{2},e_{4}) c(e_{1})c(e_{2})c(e_{4})\nonumber\\
  &+T(e_{1},e_{3},e_{4}) c(e_{1})c(e_{3})c(e_{4})+T(e_{2},e_{3},e_{4}) c(e_{2})c(e_{3})c(e_{4})
    \Big) \Big] \nonumber\\
     =& -{\rm{Str}}\Big[\sum_{\alpha=1}^{4}u_{\alpha}v_{\alpha}w_{\alpha}  c(e_{\alpha})
  \Big( T(e_{1},e_{2},e_{3}) c(e_{1})c(e_{2})c(e_{3})+T(e_{1},e_{2},e_{4}) c(e_{1})c(e_{2})c(e_{4})\nonumber\\
  &+T(e_{1},e_{3},e_{4}) c(e_{1})c(e_{3})c(e_{4})+T(e_{2},e_{3},e_{4}) c(e_{2})c(e_{3})c(e_{4})
    \Big) \Big]
\end{align}
Combining  (4.45)-(4.48),  we obtain
\begin{align}
&{\rm{Tr}}\big(c(u)c(v)c(w)c(T)\gamma )\big)\nonumber\\
  =& H_{1}|_{\alpha=\beta}+H_{2}|_{\alpha=\gamma}+H_{3 }|_{\beta=\gamma}-2 H_{4 }|_{\alpha=\beta=\gamma}\nonumber\\
 =&-g(u,v) \big( -T(e_{1},e_{2},e_{3}) w_{4} +T(e_{1},e_{2},e_{4}) w_{3}
  -T(e_{1},e_{3},e_{4}) w_{2}\nonumber\\
    &+T(e_{2},e_{3},e_{4})w_{1}  \big) {\rm{Str}}\big( c(e_{1}) c(e_{2})c(e_{3})c(e_{4})\big)\nonumber\\
 &+g(u,w) \big( -T(e_{1},e_{2},e_{3}) v_{4} +T(e_{1},e_{2},e_{4}) v_{3}
  -T(e_{1},e_{3},e_{4}) v_{2}\nonumber\\
  &+T(e_{2},e_{3},e_{4})v_{1}  \big) {\rm{Str}}\big( c(e_{1}) c(e_{2})c(e_{3})c(e_{4})\big)\nonumber\\
  &-g(v,w) \big( -T(e_{1},e_{2},e_{3}) u_{4} +T(e_{1},e_{2},e_{4}) u_{3}
  -T(e_{1},e_{3},e_{4}) u_{2}\nonumber\\
  &+T(e_{2},e_{3},e_{4})u_{1}  \big) {\rm{Str}}\big( c(e_{1}) c(e_{2})c(e_{3})c(e_{4})\big)\nonumber\\
  =&-g(u,v) \frac{2^{2}}{(\sqrt{-1})^{2}}\langle   w^{*}\wedge T,
e_{1}^{*}\wedge e_{2}^{*}\wedge e_{3}^{*}\wedge e_{4}^{*}   \rangle
+g(u,w) \frac{2^{2}}{(\sqrt{-1})^{2}}\langle   v^{*}\wedge T,
e_{1}^{*}\wedge e_{2}^{*}\wedge e_{3}^{*}\wedge e_{4}^{*}   \rangle\nonumber\\
&-g(v,w) \frac{2^{2}}{(\sqrt{-1})^{2}}\langle  u^{*}\wedge T,
e_{1}^{*}\wedge e_{2}^{*}\wedge e_{3}^{*}\wedge e_{4}^{*}   \rangle\nonumber\\
=&\frac{2^{2}}{(\sqrt{-1})^{2}} \langle -g(u,v)w^{*}\wedge T+ g(u,w)v^{*}\wedge T-g(v,w)u^{*}\wedge T,
e_{1}^{*}\wedge e_{2}^{*}\wedge e_{3}^{*}\wedge e_{4}^{*}   \rangle.
    \end{align}
 Then $n=2m=4$, substituting (4.48) into (3.10)  leads to the Spectral torsion for $D_{F}+ \sqrt{-1}c(T)\gamma \otimes \Phi $
\begin{align}
\mathscr{T}(c(u),c(v),c(w)) =&
 \int_M  2^{4}\sqrt{-1}\langle -g(u,v)w^{*}\wedge T+ g(u,w)v^{*}\wedge T- g(v,w)u^{*}\wedge T,
e_{1}^{*}\wedge e_{2}^{*}\wedge e_{3}^{*}\wedge e_{4}^{*}   \rangle \nonumber\\
&\times {\rm{Tr}}^{F}(\Phi){\rm{vol}}(S^{3}){\rm{d}}x \nonumber\\
=& \int_M  2^{4}\sqrt{-1}(-g(u,v)w^{*}\wedge T+ g(u,w)v^{*}\wedge T-g(v,w)u^{*}\wedge T) {\rm{Tr}}^{F}(\Phi){\rm{vol}}(S^{3}).
\end{align}
From  (4.43) and (4.49),  we have
 \begin{thm}
For the Connes type  operator with the trilinear Clifford multiplication by functional of differential one-forms
$c(u),c(v),c(w)$, the spectral torsion for the Connes type operator  $D_{F}+ \sqrt{-1}c(T)\gamma \otimes \Phi $
equals to
\begin{align}
\mathscr{T}(c(u),c(v),c(w)) =&\left\{
       \begin{array}{c}
        \int_M  2^{4}\sqrt{-1}(-g(u,v)w^{*}\wedge T+ g(u,w)v^{*}\wedge T-g(v,w)u^{*}\wedge T) {\rm{Tr}}^{F}(\Phi){\rm{vol}}(S^{3}),
           ~n=4; \\[2pt]
       \int_M  2^{4}(u^{*}\wedge v^{*}\wedge w^{*}\wedge T){\rm{Tr}}^{F}(\Phi){\rm{vol}}(S^{5}),  ~~~~~~~~~~~~~~~~~~~~~~~~~~~~~~~~~~~~~~~~~~~n=6; \\[2pt]
     0, ~~~~~~~~~~~~~~~~~~~~~~~~~~~~~~~~~~~~~~~~~~~~~~~~~~~~~~~~~~~~~~~~~~~~~~~~~~~~~~~~~~n\neq 4,n\neq 6.
       \end{array}
    \right.
\end{align}
\end{thm}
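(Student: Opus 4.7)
The plan is to specialize the general spectral torsion formula (3.10) to the operator $D_F + \sqrt{-1}c(T)\gamma \otimes \Phi$ using the $-2m$-order symbol expression (4.44). The trace over $S(TM) \otimes F$ factors as $\mathrm{Tr}^{S(TM) \otimes F}(A \otimes \Phi) = \mathrm{Tr}^{S(TM)}(A)\,\mathrm{Tr}^F(\Phi)$, and since every term arising from $\Box = \sqrt{-1}c(T)\gamma$ carries the grading operator $\gamma$, the spinor trace collapses to the super trace $\mathrm{Str}$. Lemma 4.9 then dictates that a Clifford monomial $c(e_{i_1})\cdots c(e_{i_q})$ has nonzero super trace only when $q = 2m$ after all Clifford anticommutation reductions, which immediately singles out a small set of admissible dimensions.

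The governing factor in the symbol is $c(u)c(v)c(w)c(T)$, carrying $3 + 3 = 6$ Clifford generators. For $n > 6$, even this maximal monomial is too short to match $2m$, so every contribution vanishes. For $n = 6$, the six generators exactly match $2m$, and applying Lemma 4.9 directly produces $\tfrac{2^3}{(\sqrt{-1})^3}\langle u^* \wedge v^* \wedge w^* \wedge T,\ e_1^* \wedge \cdots \wedge e_6^*\rangle$, which gives the wedge-product formula after accounting for the prefactor $\sqrt{-1}$ and the $\mathrm{Tr}^F(\Phi)$ piece. For $n = 4$, I must use the Clifford relation $\{c(e_\alpha), c(e_\beta)\} = -2\delta_{\alpha\beta}$ to contract exactly one pair of indices among $(u,v,w)$ (the indices of $T$ are already antisymmetric), reducing the six-factor monomial to four factors. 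The three possible pairings $(u,v), (u,w), (v,w)$ generate the three metric-contraction terms $g(u,v)\,w^* \wedge T$, $g(u,w)\,v^* \wedge T$, $g(v,w)\,u^* \wedge T$ with signs determined by counting anticommutations, producing the stated alternating combination.

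The subleading symbol pieces involving $c(\partial_i)c(\xi)$ integrate via (4.17) to $\tfrac{1}{n}\mathrm{vol}(S^{n-1})\sum_i c(\partial_i)c(T)\gamma\, c(\partial_i)$ and the companion $\sum_i c(T)\gamma\, c(\partial_i)c(\partial_i)$. The standard contraction identity $\sum_i c(e_i)\omega c(e_i) = (-1)^{|\omega|}(n - 2|\omega|)\omega$, applied to $\omega$ of appropriate Clifford degree extracted from $c(T)\gamma$, collapses these sums into scalar multiples of the leading contribution. Assembling the multiplicative constants, the overall coefficient comes out to $2^4$ in both dimensions $n=4$ and $n=6$. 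The vanishing for $n \notin \{4, 6\}$ is automatic: for $n > 6$ by the Lemma 4.9 length mismatch, and for $n \leq 2$ from $c(T) = 0$ via the index range $1 \le \alpha < \beta < \gamma \le n$.

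The main obstacle will be the index bookkeeping in the $n = 4$ case: the reduction of $c(u)c(v)c(w)c(T)$ to four-factor products requires enumerating all pairings between the $(u,v,w)$-indices and the antisymmetric $T$-index triple $\{h,k,l\}$, while tracking the cumulative sign contributions from each Clifford swap and from the $\mathrm{Str}$ evaluation on the surviving top-degree monomial $c(e_1)c(e_2)c(e_3)c(e_4)$. Verifying that the leading and subleading symbol contributions combine to precisely the coefficient $2^4$ (without spurious uncancelled cross terms) is where the routine but intricate calculation lives; by contrast, the high-dimensional and degenerate cases follow almost trivially from the trace vanishing in Lemma 4.9.
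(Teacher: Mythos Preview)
Your proposal is correct and follows essentially the same approach as the paper: specialize the symbol formula (3.9)/(4.38), reduce the spinor trace to a super trace via the factor $\gamma$, invoke Lemma 4.9 to isolate the dimensions $n=4,6$, and in dimension $4$ extract the metric-contraction terms by pairing two of the $(u,v,w)$ indices (the paper carries this out as the explicit $H_1+H_2+H_3-2H_4$ inclusion--exclusion in (4.44)--(4.49)). The only stylistic difference is that for the subleading $c(\partial_i)\cdots\xi_j c(\xi)$ pieces you propose the Clifford contraction identity $\sum_i c(e_i)\omega c(e_i)=(-1)^{|\omega|}(n-2|\omega|)\omega$, whereas the paper anticommutes $c(\partial_i)$ past $c(u),c(v),c(w)$ step by step to reach the same factor $\tfrac{n-6}{n}$ in (4.41); these are equivalent routes to the same scalar multiple of $\mathrm{Tr}\big(c(u)c(v)c(w)c(T)\gamma\big)$.
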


\subsection{The spectral torsion for manifold with boundary associated with $D_{F}+(c(T)+\sqrt{-1}c(Y)) \otimes \Phi $  }
The purpose of this section is to specify The spectral torsion on manifold with boundary  for the Connes type operator with torsion
$\widetilde{D}_{F}=D_{F}+(c(T)+\sqrt{-1}c(Y)) \otimes \Phi $ .
Now we recall the main theorem in \cite{FGLS}.
\begin{thm}\cite{FGLS}
 Let $X$ and $\partial X$ be connected, ${\rm dim}X=n\geq3$,
 $A=\left(\begin{array}{lcr}\pi^+P+G &   K \\
T &  S    \end{array}\right)$ $\in \mathcal{B}$ , and denote by $p$, $b$ and $s$ the local symbols of $P,G$ and $S$ respectively.
 Define:
 \begin{align}
{\rm{\widetilde{Wres}}}(A)=&\int_X\int_{\bf S}{\mathrm{Tr}}_E\left[p_{-n}(x,\xi)\right]\sigma(\xi){\rm d}x \nonumber\\
&+2\pi\int_ {\partial X}\int_{\bf S'}\left\{{\mathrm{Tr}}_E\left[({\mathrm{Tr}}b_{-n})(x',\xi')\right]+{\mathrm{Tr}}
_F\left[s_{1-n}(x',\xi')\right]\right\}\sigma(\xi'){\rm d}x',
\end{align}
Then~~ a) ${\rm \widetilde{Wres}}([A,B])=0 $, for any
$A,B\in\mathcal{B}$;~~ b) It is a unique continuous trace on
$\mathcal{B}/\mathcal{B}^{-\infty}$.
\end{thm}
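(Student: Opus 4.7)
The plan is to prove this result in two parts: the vanishing of $\widetilde{\rm Wres}$ on commutators (property (a)), and the uniqueness of such a continuous trace (property (b)). Both follow the general strategy that extends Wodzicki's classical argument on closed manifolds to the Boutet de Monvel setting, where the novelty is the careful bookkeeping of boundary contributions.

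For part (a), I would exploit the block-matrix structure of $\mathcal{B}$. Writing $A$ and $B$ in the form given in the statement, the commutator $[A,B]$ decomposes into four entries of the corresponding types: a truncated interior pseudodifferential operator with a singular Green remainder, a potential operator, a trace operator, and a pseudodifferential operator on $\partial X$. For the interior $(1,1)$-entry, one invokes Wodzicki's classical identity $\int_X\int_{S^*X}{\rm Tr}\,[p,q]_{-n}\,\sigma(\xi)\,dx=0$ for symbols on a closed manifold, localised away from the boundary. For the $(2,2)$-entry one obtains a commutator on $\partial X$ together with cross terms of the form $T_A K_B - T_B K_A$, which are handled by Wodzicki's trace on $\partial X$ combined with symbolic identities for potential/trace compositions.

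The delicate point is that $\pi^+(PQ)\neq(\pi^+P)(\pi^+Q)$; the difference $L(P,Q):=\pi^+(PQ)-(\pi^+P)(\pi^+Q)$ is a genuine singular Green operator whose symbol is computable from the transmission symbols of $P$ and $Q$ at $\partial X$. The heart of the argument is to show that, when inserted into the boundary integral in the second term of (4.51), the contributions of $L(P_A,P_B)-L(P_B,P_A)$ together with cross terms arising from $[G_A,P_B]$, $[K_A,T_B]$, and so on, combine into a full commutator on the cosphere of $\partial X$, and therefore integrate to zero by the trace property of the boundary Wodzicki residue. This matching identity is proven by explicit computation in normal coordinates using the oscillatory-integral representation of Boutet de Monvel symbols and the Wiener-Hopf factorisation of the transmission operators.

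For part (b), any continuous trace $\tau$ vanishes on commutators and hence factors through the quotient $\mathcal{B}/([\mathcal{B},\mathcal{B}]+\mathcal{B}^{-\infty})$. Using the principal-symbol exact sequence for Boutet de Monvel's algebra, one identifies this quotient with a one-dimensional space of residue functionals in the interior symbol on $S^*X$ together with boundary-symbol data on $S^*\partial X$; evaluating $\tau$ on a normalising model operator then forces $\tau=c\cdot\widetilde{\rm Wres}$ for a unique scalar $c$. The main obstacle will be precisely the symbolic computation of the singular Green leftover $L(P,Q)$ and the codimension-one identification of the quotient, both of which require a substantial amount of Boutet de Monvel calculus and are the reason the FGLS theorem is considerably harder than Wodzicki's closed-manifold result.
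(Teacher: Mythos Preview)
The paper does not prove this theorem at all: it is stated as Theorem~4.12 with the attribution \cite{FGLS} and is simply quoted from the Fedosov--Golse--Leichtnam--Schrohe paper as a known input, with no argument supplied. So there is nothing in this paper to compare your proposal against; the authors are using the FGLS trace as a black box in order to set up formula~(4.52) for the boundary term $\Psi$.

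That said, your outline is broadly in the spirit of the original FGLS proof (Wodzicki's closed-manifold argument for the interior plus a careful symbolic analysis of the singular Green leftover $L(P,Q)$ at the boundary), but as a self-contained proof it is only a plan, not a proof: the two hard steps you flag---the explicit matching identity showing that the boundary contributions of $L(P_A,P_B)-L(P_B,P_A)$ and the cross terms assemble into a commutator on $S^*\partial X$, and the identification of $\mathcal{B}/([\mathcal{B},\mathcal{B}]+\mathcal{B}^{-\infty})$ as one-dimensional---are precisely where all the work lies, and you have not carried them out. In the context of the present paper no such proof is expected; a citation suffices.
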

 Let $M$ be a compact oriented Riemannian manifold of even dimension $n=2m$. Let $p_{1},p_{2}$ be nonnegative integers and $p_{1}+p_{2}\leq n$.
 Denote by $\sigma_{l}(A)$ the $l$-order symbol of an operator A. An application of (2.1.4) in \cite{Wa1} shows that
\begin{equation}
\widetilde{{\rm Wres}}[\pi^+\widetilde{D}_{F}^{-p_1}\circ\pi^+\widetilde{D}_{F}^{-p_2}]
=\int_M\int_{|\xi|=1}{\rm Tr}_{S(TM)}[\sigma_{-n}(\widetilde{D}_{F}^{-p_1-p_2})]\sigma(\xi)\texttt{d}x+\int_{\partial
M}\Psi,
\end{equation}
where
 \begin{eqnarray}
\Psi&=&\int_{|\xi'|=1}\int^{+\infty}_{-\infty}\sum^{\infty}_{j, k=0}
\sum\frac{(-i)^{|\alpha|+j+k+1}}{\alpha!(j+k+1)!}
{\rm Tr}_{S(TM)\otimes F}
\Big[\partial^j_{x_n}\partial^\alpha_{\xi'}\partial^k_{\xi_n}
\sigma^+_{r}(\widetilde{D}_{F}^{-p_1})(x',0,\xi',\xi_n)\nonumber\\
&&\times\partial^\alpha_{x'}\partial^{j+1}_{\xi_n}\partial^k_{x_n}\sigma_{l}
(\widetilde{D}_{F}^{-p_2})(x',0,\xi',\xi_n)\Big]\rm{d}\xi_n\sigma(\xi')\rm{d}x',
\end{eqnarray}
and $r-k+|\alpha|+\ell-j-1=-n,r\leq-p_{1},\ell\leq-p_{2}$.

When $n=2m$, $ r-k-|\alpha|+l-j-1=-2m,~~r\leq-1,l\leq-2m+2$, and $r=l=-1,~k=|\alpha|=j=0,$ then
\begin{align}
\Psi=&\int_{|\xi'|=1}\int^{+\infty}_{-\infty}
{\rm Tr}_{S(TM)\otimes F}
[ \sigma^+_{-1}(c(u)c(v)c(w)\widetilde{D}_{F}^{-1})(x',0,\xi',\xi_n)\nonumber\\
&\times
\partial_{\xi_n}\sigma_{-2m+2}(\widetilde{D}_{F}^{-2m+2})(x',0,\xi',\xi_n)]\rm{d}\xi_{2m}\sigma(\xi')\rm{d}x'.
\end{align}
An easy calculation gives
  \begin{align}
 \pi^+_{\xi_n}\big( \sigma_{-1}(c(u)c(v)c(w)\widetilde{D}_{F}^{-1})\big)=&
    c(u)c(v)c(w)\pi^+_{\xi_n}\big( \sigma_{-1} (\widetilde{D}_{F}^{-1})\big)\nonumber\\
=&c(u)c(v)c(w) \frac{c(\xi')+ic(\rm{d}x_{n})}{2(\xi_n-i)} \nonumber\\
=& \frac{c(u)c(v)c(w)c(\xi')}{2(\xi_n-i)}+ \frac{ i c(u)c(v)c(w)c(\rm{d}x_{n})}{2(\xi_n-i)},
 \end{align}
where some basic facts and formulae about Boutet de Monvel's calculus which can be found  in Sec.2 in \cite{Wa1}.
By (3.6), we get
  \begin{align}
\partial_{\xi_n}\big(\sigma_{-2m+2}(\widetilde{D}_{F}^{-2m+2})\big)   (x_0)|_{|\xi'|=1}
=\partial_{\xi_n}\big(|\xi|^{2}\big)^{1-m}(x_0)=2(1-m)\xi_n(1+\xi_n^2)^{-m}.
\end{align}
By the relation of the Clifford action and $ {\rm{Tr}}(AB)= {\rm{Tr}}(BA) $, we get
  \begin{align}
{\rm Tr}_{S(TM)}[c(u)c(v)c(w)c(\rm{d}x_{n})]=&\sum_{i,j,k=1}^{n}  u_{j}v_{k}w_{l}
{\rm Tr}_{S(TM)}[c(e_{j})c(e_{k}) c(e_{l})c(\rm{d}x_{n})]\nonumber\\
=&\sum_{i,j,k=1}^{n}  u_{j}v_{k}w_{l} (-\delta_{j}^{l}\delta_{k}^{n}+\delta_{j}^{n}\delta_{k}^{l}+\delta_{j}^{k}\delta_{n}^{l})
{\rm Tr}_{S(TM)}[ {\rm Id}]\nonumber\\
=&\sum_{i,j,k=1}^{n}  (-u_{j}v_{n}w_{j}+u_{n}v_{k}w_{k}+w_{n}u_{j}v_{j})
{\rm Tr}_{S(TM)}[ {\rm Id}]\nonumber\\
=&(u_{n}g(v,w)-v_{n}g(u,w)+w_{n}g(u,v)){\rm Tr}_{S(TM)}[ {\rm Id}].
\end{align}
In the similar way we obtain
\begin{align}
& {\rm Tr}_{S(TM)\otimes F}
[ \sigma^+_{-1}(c(u)c(v)c(w)\widetilde{D}_{F}^{-1})(x',0,\xi',\xi_n) \times
\partial_{\xi_n}\sigma_{-2m+2}(\widetilde{D}_{F}^{-2m+2})(x',0,\xi',\xi_n)] \nonumber\\
=& \frac{(1-m)\xi_n}{2(\xi_n-i)(1+\xi_n^2)^{m}}{\rm Tr}_{S(TM)\otimes F}[c(u)c(v)c(w)c(\xi')]\nonumber\\
&+ \frac{ i(1-m)\xi_n}{2(\xi_n-i)(1+\xi_n^2)^{m}}{\rm Tr}_{S(TM)\otimes F}[c(u)c(v)c(w)c(\rm{d}x_{n})]\nonumber\\
=& \frac{(1-m)\xi_n}{2(\xi_n-i)(1+\xi_n^2)^{m}}\sum_{i=1}^{n-1}\xi_i{\rm Tr}_{S(TM)\otimes F}[c(u)c(v)c(w)c(e_{i})]\nonumber\\
&+ \frac{ i(1-m)\xi_n}{2(\xi_n-i)(1+\xi_n^2)^{m}}(u_{n}g(v,w)-v_{n}g(u,w)+w_{n}g(u,v)){\rm Tr}_{S(TM)\otimes F}[ {\rm Id}].
\end{align}
Also, straightforward computations yield
\begin{align}
\Big(\frac{1}{(\xi_n+i)^{1-m}}\Big)^{(m)}=&-(m-1)\big((\xi_n+i)^{-m}\big)^{(m-1)}
=(-1)^{2}(m-1)m\big((\xi_n+i)^{-m-1}\big)^{(m-2)}\nonumber\\
=& \cdots =(-1)^{m}\frac{(2m-2)!}{(m-2)!}(2i)^{-2m+1},
\end{align}
and
\begin{align}
\Big(\frac{i}{(\xi_n+i)^{m}}\Big)^{(m)}=&-im\big((\xi_n+i)^{-m-1}\big)^{(m-1)}
=  i(-1)^{2}m(m-1)\big((\xi_n+i)^{-m-2}\big)^{(m-2)}\nonumber\\
=& \cdots
=(-1)^{m}\frac{(2m-1)!}{(m-1)!}(2i)^{-2m}.
\end{align}
Combining (4.60) and (4.61), we have
\begin{align}
\Big(\frac{\xi_n}{(\xi_n+i)^{m}}\Big)^{(m)}|_{\xi_n=i}
=&\Big[\Big(\frac{1}{(\xi_n+i)^{m}}\Big)^{(m)}-\Big(\frac{i}{(\xi_n+i)^{m}}\Big)^{(m)}\Big]|_{\xi_n=i}
=\frac{(2m-2)!(-i)2^{-2m}}{(m-1)!}.
\end{align}
Substituting into above calculation we get
\begin{align}
\Psi=&\int_{|\xi'|=1}\int^{+\infty}_{-\infty}
{\rm Tr}_{S(TM)\otimes F}
[ \sigma^+_{-1}(c(u)c(v)c(w)\widetilde{D}_{F}^{-1})(x',0,\xi',\xi_n)\nonumber\\
&\times
\partial_{\xi_n}\sigma_{-2m+2}(\widetilde{D}_{F}^{-2m+2})(x',0,\xi',\xi_n)]\rm{d}\xi_{2m}\sigma(\xi')\rm{ d}x'\nonumber\\
=&\int_{|\xi'|=1}\int^{+\infty}_{-\infty}\Big(
 \frac{(1-m)\xi_n}{2(\xi_n-i)(1+\xi_n^2)^{m}}\sum_{i=1}^{n-1}\xi_i{\rm Tr}_{S(TM)\otimes F}[c(u)c(v)c(w)c(e_{i})]\nonumber\\
&+ \frac{ i(1-m)\xi_n}{2(\xi_n-i)(1+\xi_n^2)^{m}}(u_{n}g(v,w)-v_{n}g(u,w)+w_{n}g(u,v)){\rm Tr}_{S(TM)\otimes F}[ {\rm Id}] \Big)
 \rm{d}\xi_{2m}\sigma(\xi')\rm{d}x'\nonumber\\
=& i(1-m)(u_{n}g(v,w)-v_{n}g(u,w)){\rm Tr}_{S(TM)\otimes F}[ {\rm Id}] {\rm{vol}}(S^{n-2})
 \int^{+\infty}_{-\infty}\frac{\xi_n}{2(\xi_n-i)(1+\xi_n^2)^{m}} \rm{d}\xi_{2m}\rm{d}x'\nonumber\\
 =& \frac{(2m-2)!(1-m)i2^{-2m+1}\pi}{m!(m-1)!}(u_{n}g(v,w)-v_{n}g(u,w)+w_{n}g(u,v)){\rm Tr}_{S(TM)\otimes F}[ {\rm Id}] {\rm{vol}}(S^{n-2})
{\rm d}\rm{vol}_{\partial_{M}}.
\end{align}
Summing up (4.24) and (4.63) leads to the spectral torsion for manifold with boundary  as follows.
\begin{thm}
For the Connes type operator with the trilinear Clifford multiplication by functional of differential one-forms
$c(u),c(v),c(w)$, the spectral torsion for  type-I operator  $D_{F}+(c(T)+\sqrt{-1}c(Y)) \otimes \Phi $ on manifold with boundary  equals to
\begin{align}
\mathscr{T}(c(u),c(v),c(w))
=&\int_M -2^{m+1}T(u,v,w)
{\rm{ Tr}}^{F}( \Phi){\rm{vol}}(S^{n-1}) {\rm d}\rm{vol}_{M}
+\int_{\partial_{M}}\frac{(2m-2)!(1-m)i2^{-2m+1}\pi}{m!(m-1)!}\nonumber\\
&\times(u_{n}g(v,w)-v_{n}g(u,w)+w_{n}g(u,v)){\rm Tr}_{S(TM)\otimes F}[ {\rm Id}] {\rm{vol}}(S^{n-2})
{\rm d}\rm{vol}_{\partial_{M}}.
\end{align}
\end{thm}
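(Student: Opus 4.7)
The plan is to split the noncommutative residue according to Theorem 4.10 of Fedosov--Golse--Leichtnam--Schrohe, which decomposes $\widetilde{\mathrm{Wres}}[\pi^+ \widetilde{D}_F^{-1}\circ \pi^+ \widetilde{D}_F^{1-2m}]$ into an interior piece and a boundary piece. For the interior piece, one uses exactly the computation carried out for the closed case in Section 4.1: the integral $\int_M\int_{|\xi|=1}\mathrm{Tr}_{S(TM)\otimes F}[\sigma_{-n}(c(u)c(v)c(w)\widetilde{D}_F^{1-2m})]\sigma(\xi)\,\mathrm{d}x$ is identified with $\mathscr{T}(c(u),c(v),c(w))$ in Theorem 4.5, yielding the bulk term $\int_M -2^{m+1}T(u,v,w)\,\mathrm{Tr}^F(\Phi)\,\mathrm{vol}(S^{n-1})\,\mathrm{dvol}_M$. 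So no new work is needed for the interior.

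For the boundary piece I would use formula (4.53)--(4.54), which after index bookkeeping reduces to the single term $\Psi$ in (4.55), since the constraints $r-k-|\alpha|+\ell-j-1=-2m$, $r\le -1$, $\ell\le -2m+2$ force $r=-1$, $\ell=-2m+2$, $k=|\alpha|=j=0$. First I would compute the projection $\pi^+_{\xi_n}\sigma_{-1}(c(u)c(v)c(w)\widetilde{D}_F^{-1})$ via the standard identity $\pi^+_{\xi_n}\bigl(\frac{c(\xi)}{|\xi|^2}\bigr)=\frac{c(\xi')+ic(dx_n)}{2(\xi_n-i)}$, then differentiate $|\xi|^{2(1-m)}$ in $\xi_n$ to obtain $2(1-m)\xi_n(1+\xi_n^2)^{-m}$.

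Next I would evaluate the fiber trace. The only non-vanishing contributions come from $\mathrm{Tr}_{S(TM)}[c(u)c(v)c(w)c(dx_n)]$ and $\mathrm{Tr}_{S(TM)}[c(u)c(v)c(w)c(e_i)]$ with $1\le i\le n-1$; the former reduces using the Clifford relations to $(u_n g(v,w)-v_n g(u,w)+w_n g(u,v))\mathrm{Tr}[\mathrm{Id}]$ while the latter integrates to zero on $|\xi'|=1$ by parity (odd number of $\xi'$-factors). So only the $c(dx_n)$ term survives after integrating over the equatorial $S^{n-2}$.

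The main technical step is evaluating the $\xi_n$-integral $\int_{-\infty}^{+\infty}\frac{\xi_n}{2(\xi_n-i)(1+\xi_n^2)^m}\,d\xi_n$, which I would do by residues at $\xi_n=i$ using the higher-order pole formula $\bigl(\frac{\xi_n}{(\xi_n+i)^m}\bigr)^{(m)}\big|_{\xi_n=i}=\frac{(2m-2)!(-i)2^{-2m}}{(m-1)!}$ together with $2\pi i$ times this residue divided by $m!$. Assembling the constants produces the factor $\frac{(2m-2)!(1-m)i\,2^{-2m+1}\pi}{m!(m-1)!}$ appearing in $\Psi$. Adding the interior and boundary contributions then gives the stated formula. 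The main obstacle is the careful combinatorics of the residue calculation and ensuring that parity and Clifford identities kill all other potential boundary terms; once those are pinned down the remainder is direct substitution into Theorem 4.10.
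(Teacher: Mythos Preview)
Your proposal is correct and follows essentially the same route as the paper: the FGLS decomposition into interior and boundary pieces, the reduction of the boundary sum to the single term $\Psi$ via the index constraints, the computation of $\pi^+_{\xi_n}\sigma_{-1}$ and $\partial_{\xi_n}\sigma_{-2m+2}$, the Clifford trace identity for $c(u)c(v)c(w)c(dx_n)$, the parity argument killing the $c(\xi')$ contribution, and the residue evaluation at $\xi_n=i$ all match the paper's argument step for step. No substantive differences.
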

\begin{rem}
Let $(M, g)$ be a compact Riemannian manifold, for Dirac operator
 with a fully antisymetric torsion $D_{T}=D-\frac{i}{8}T_{jkl}\gamma^{j}\gamma^{k}\gamma^{l}$,
the torsion functional defined by Dabrowski et al.\cite{DSZ}
recovers the torsion of the linear connection for a canonical spectral triple over a closed spin manifold.
Based on this idea, Theorem 4.13 gives the spectral torsion for manifold with boundary.
\end{rem}

\section*{ AUTHOR DECLARATIONS}
Conflict of Interest£º

The authors have no conflicts to disclose.
\section*{ Author Contributions}
Jian Wang: Investigation (equal); Writing- original draft (equal). Yong Wang: Investigation (equal); Writing-original draft (equal).
\section*{DATA AVAILABILITY}
Data sharing is not applicable to this article as no new data were created or analyzed in this study.

\section*{ Acknowledgements}
The first author was supported by NSFC. 11501414. The second author was supported by NSFC. 11771070.
 The authors also thank the referee for his (or her) careful reading and helpful comments.

\end{document}